\newtheorem{theorem}{Theorem}
\newtheorem{definition}{Definition}
\begin{document}

\title{Multicast Traffic Engineering for Software-Defined Networks%
}

\author[*]{Liang-Hao Huang}
\author[*]{Hsiang-Chun Hsu}
\author[*]{Shan-Hsiang Shen}
\author[*]{De-Nian Yang}
\author[*$\dagger$]{Wen-Tsuen Chen}
\affil[*]{Institute of Information Science, Academia Sinica, Taipei, Taiwan}
\affil[$\dagger$]{Department of Computer Science, National Tsing Hua University, Hsinchu, Taiwan}
\affil[ ]{}
\affil[ ]{Email: lhhuang@iis.sinica.edu.tw, hchsu0222@gmail.com, \{sshen3, dnyang, chenwt\}@iis.sinica.edu.tw}

\renewcommand \Authands{ and }
%\author{}
\maketitle

\begin{abstract}
Although Software-Defined Networking (SDN) enables flexible network resource
allocations for traffic engineering, current literature mostly focuses on
unicast communications. Compared to traffic engineering for multiple unicast
flows, multicast traffic engineering for multiple trees is very challenging
not only because minimizing the bandwidth consumption of a single multicast
tree by solving the Steiner tree problem is already NP-Hard, but the Steiner
tree problem does not consider the link capacity constraint for multicast
flows and node capacity constraint to store the forwarding entries in Group
Table of OpenFlow. In this paper, therefore, we first study the hardness
results of scalable multicast traffic engineering in SDN. We prove that
scalable multicast traffic engineering with only the node capacity
constraint is NP-Hard and not approximable within $\delta $, which is the
number of destinations in the largest multicast group. We then prove that
scalable multicast traffic engineering with both the node and link capacity
constraints is NP-Hard and not approximable within any ratio. To solve the
problem, we design a $\delta $-approximation algorithm, named Multi-Tree
Routing and State Assignment Algorithm (MTRSA), for the first case and
extend it to the general multicast traffic engineering problem. The simulation and implementation
results demonstrate that the solutions obtained by the proposed algorithm
outperform the shortest-path trees and
Steiner trees. Most importantly, MTRSA is computation-efficient and can be
deployed in SDN since it can generate the solution with numerous trees in a
short time.
\end{abstract}

%\author{\IEEEauthorblockN{Shan-Hsiang Shen$^\ast$,~Liang-Hao Huang$\ast$,~De-Nian Yang$\ast$, and ~Wen-Tsuen Chen}
%\IEEEauthorblockA{Academia Sinica, Taipei, Taiwan\\
%\{sshen3, lhhuang, dnyang, chenwt\}@iis.sinica.edu.tw}}

%\author[*]{Shan-Hsiang Shen}
%\author[*]{Liang-Hao Huang}
%\author[*]{De-Nian Yang}
%\author[*$\dagger$]{Wen-Tsuen Chen}
%\affil[*]{Institute of Information Science, Academia Sinica, Taipei, Taiwan}
%\affil[$\dagger$]{Department of Computer Science, National Tsing Hua University, Hsinchu, Taiwan}
%\affil[ ]{}
%\affil[ ]{Email: \{sshen3, lhhuang, dnyang, chenwt\}@iis.sinica.edu.tw}

\renewcommand \Authands{ and }

%\author{}

% author names and affiliations
% use a multiple column layout for up to three different
% affiliations

%\author{\IEEEauthorblockN{Liang-Hao Huang,~Hui-Ju Hung,~Chih-Chung Lin, and~De-Nian Yang}
%\IEEEauthorblockA{Academia Sinica, Taipei, Taiwan\\
%\{lhhuang, hjhung, chchlin, dnyang\}@iis.sinica.edu.tw}}

%\author{\IEEEauthorblockN{DLiang-Hao Huang, Hui-Ju Hung, Chih-Chung Lin, De-Nian Yang}
%\IEEEauthorblockA{
%Email: dnyang@iis.sinica.edu.tw} }

\begin{IEEEkeywords}
SDN, multicast, traffic engineering
\end{IEEEkeywords}

\section{Introduction}

\IEEEPARstart{S}{oftware-defined} networking (SDN) provides a new
centralized architecture with flexible network resource management to
support a huge amount of data transmission~\cite{mckeown_openflow:_2008}.
Different from legacy networks, SDN separates the control plane from
switches and allows the control plane to be programmable to efficiently
optimize the network resources. OpenFlow~\cite{mckeown_openflow:_2008} in
SDN includes two major components: controllers (SDN-Cs) and forwarding
elements (SDN-FEs). Controllers are in charge of handling the control plane
and install forwarding rules based on different policies, while forwarding
elements in switches deliver packets according to the rules specified by the
controllers. Compared with the current Internet, routing paths no longer
need to be the shortest ones, and the paths can be distributed more flexibly
inside the network. It has been demonstrated that SDN provides a better
overview of network topologies and enables centralized computation for
traffic engineering for multiple unicast flows~\cite%
{agarwal_traffic_2013,Lazaris:2014:TSS:2674005.2675011,Nguyen:2014:ORP:2620728.2620753}%
. However, multicast traffic engineering for multiple multicast trees in SDN
has attracted much less attention in previous studies.

Compared to unicast, multicast has been shown in empirical studies to be
able to effectively reduce overall bandwidth consumption in backbone
networks by around 50\% \cite{malli_benefit_1998}. It employs a multicast
tree, instead of disjoint unicast paths, from the source to all destinations
of a multicast group, in order to avoid unnecessary traffic duplication. The
current Internet multicast standard, i.e., PIM-SM \cite{Imai2007}, employs
a shortest-path tree to connect the source and destinations, and traffic
engineering is difficult for PIM-SM since the path from the source to each
destination is the shortest one. A shortest-path tree tends to lose many good
opportunities to reduce the bandwidth consumption by sharing more common
edges among the paths to different destinations. In contrast, to minimize
the bandwidth consumption, a Steiner tree (ST)~\cite{hwang_steiner_1992} in
Graph Theory minimizes the number of edges in a multicast tree.
Nevertheless, ST only focuses on the routing of a multicast tree, instead of
jointly optimizing the resource allocations of all trees. Therefore, when
the network is heavily loaded, a link will not be able to support a large
number of STs that choose the link. Most importantly, Group Table of an
SDN-FE will be insufficient to store the forwarding entries of the STs due
to the small TCAM size \cite{Yu:2010:SFN:1851182.1851224}.

Compared to the shortest-path routing in unicast, unicast traffic
engineering in SDN is more difficult to aggregate multiple flows in Flow
Table of an SDN-FE, and the scalability has been regarded as a serious issue
in the deployment of SDN for a large network \cite{agarwal_traffic_2013, kanizo_palette:_2013}.
The scalability problem for multicast communications is even more serious
since the number of possible multicast groups is $O(2^{n})$, where $n$ is the
number of nodes in a network, and the number of possible unicast connections
is $O(n^{2})$. To remedy this issue, a promising way is to exploit the 
\textit{branch forwarding technique} \cite{Yang2008, YangLiao2008,
Stoica2000}, which stores the multicast forwarding entries in only
the \textit{branch nodes}, instead of every node, of a multicast tree, where
a branch node in a tree is the node with at least three incident edges. The
branch forwarding technique can remedy the multicast scalability problem
since packets are forwarded in a unicast tunnel from the logic port of a
branch node in SDN-FE \cite{mckeown_openflow:_2008} to another branch node. In other
words, all nodes in the path exploit unicast forwarding in the tunnel and
are no longer necessary to maintain a forwarding entry for the multicast
group. Furthermore, when a branch node is not multicast capable for a tree
(ex. Group Table is full in this paper), local \textit{unicast tunneling}
from a nearby multicast capable node has been proposed in MBONE \cite%
{Almeroth2000} and PIM-SM\footnote{%
http://www.cisco.com/c/en/us/td/docs/ios-xml/ios/ipmulti%
\_pim/configuration/xe-3s/imc-pim-xe-3s-book/imc\_tunnel.html} to allow multiple unicast tunnels to
pass through the branch node to other nodes in the tree (an example will be presented later in this section). Nevertheless, compared to multicast,
it is envisaged that local unicast tunneling will incur more bandwidth
consumption since duplicated packets will be delivered in a link. Therefore,
there is a trade-off between the link capacity and node capacity, because
each branch node can act as either a \textit{branch state node} with the
corresponding multicast forwarding entry stored in Group Table or a \textit{branch
stateless node} that exploits the unicast tunneling strategy. 

In comparison with the ST problem, \textit{scalable multicast traffic
engineering}, which jointly allocates the network resources for multiple
trees, is much more challenging because both the \textit{link capacity} and 
\textit{node capacity} constraints are involved in the problem. The link
capacity constraint states that the total rate of all multicast trees on
each link should not exceed the corresponding link capacity, while the node
capacity constraint ensures that Group Table of each node is sufficiently
large to support the multicast trees with the node as a branch state node.
Moreover, scalable multicast traffic engineering\textit{\ }with branching
forwarding and unicast tunneling techniques is able to allocate the network
resources more flexibly. When Group Table of a node is full, unicast
tunneling moves the resource requirement from the node to its incident links,
whereas the rerouting of the tree is also promising by exploiting the
resources of the nearby nodes and links. Therefore, it is necessary for
scalable multicast traffic engineering to carefully examine both the routing
and the allocation of the branch state nodes of all multicast trees. In this
paper, we explore the Scalable Multicast Traffic Engineering (SMTE) problem
for SDNs. Given the data rate requirement of each multicast tree, SMTE aims
to minimize the total bandwidth cost of all trees, by finding a tree
connecting the source and destinations of each group and assigning the
branch state nodes for each tree, such that both the link capacity and node
capacity constraints can be ensured. 

\bigskip Fig. \ref{fig1:subfig} presents an illustrative example. Fig. %
\ref{fig1:subfig:a} is the original network with the unit bandwidth cost
specified beside each link. The bandwidth cost of each link is the total
bandwidth consumption of the link multiplied by the unit bandwidth cost. The
node capacity of each node is $1$. The link capacity of edge $e_{s,a}$ is $1$%
, and the link capacities of the other edges are $\infty $ in this example.
There are two multicast trees with both flow rates as $1$. The source of the
first tree is $s_{1}=s$, and its destination set is $D_{1}=\{d_{1},d_{2},%
\dots ,d_{7}\}$. The source of the second tree is $s_{2}=s$, and its
destination set is $D_{2}=\{d_{1}^{\prime },d_{2}^{\prime },\dots
,d_{7}^{\prime }\}$. Fig. \ref{fig1:subfig:b} shows the first shortest-path
tree (blue) and the second shortest-path tree (red). The branch nodes and
branch state nodes of the first tree are $\left \{ c,u,v\right \} $ and $%
\left \{ u\right \} $, respectively. The branch nodes and branch state nodes
of the second tree are $\left \{ c,v\right \} $ and $\left \{ c,v\right \} $,
respectively. Note that $v$ is not assigned as a branch state node of the
first tree, and $u$ thus needs to exploit unicast tunneling to $d_{6}$ and $%
d_{7}$ directly. Therefore, traffic of the first tree are duplicated in edge 
$e_{u,v}$. On the other hand, if $v$ was assigned as a branch state node for
the first tree, traffic duplication in $e_{u,v}$ would be more serious for
the second tree since $v$ has three downstream nodes $d_{5}^{\prime }$, $%
d_{6}^{\prime }$, $d_{7}^{\prime }$. The total bandwidth cost of the two
shortest-path trees in Fig. \ref{fig1:subfig:b} is $99$. 

Afterward, Fig. \ref{fig1:subfig:c} shows the first Steiner tree (blue) and
the second Steiner tree (red), and the branch state nodes of the two trees
are also $\left \{ u\right \} $ and $\left \{ c,v\right \} $, respectively.
The total bandwidth cost of the trees in Fig. \ref{fig1:subfig:c} is $103$. 
Note that the total bandwidth cost in Steiner trees is higher since the 
assignment of branch state nodes are not carefully examined.
Finally, Fig. \ref{fig1:subfig:d} presents the first tree (blue) and the
second tree (red) in SMTE with the same branch state nodes specified above.
The total bandwidth cost of the trees in Fig. \ref{fig1:subfig:c} is $79$, and here $u$ 
is directed connected to $d_{1}$, $d_{2}$, and $d_{3}$ to avoid unicast 
tunneling, even though the edge cost is higher (i.e., 2) compared to the 
cost (i.e., 1) of the edge from $v$ in the first Steiner tree.
Therefore, this example manifests that it is necessary to consider the tree 
routing and the assignment of branch state nodes of multiple trees jointly 
for scalable multicast traffic engineering.

\begin{figure}[t]
\centering
\subfigure[Original network]{
    \label{fig1:subfig:a}
    \includegraphics[height = 3cm, width = 4cm]{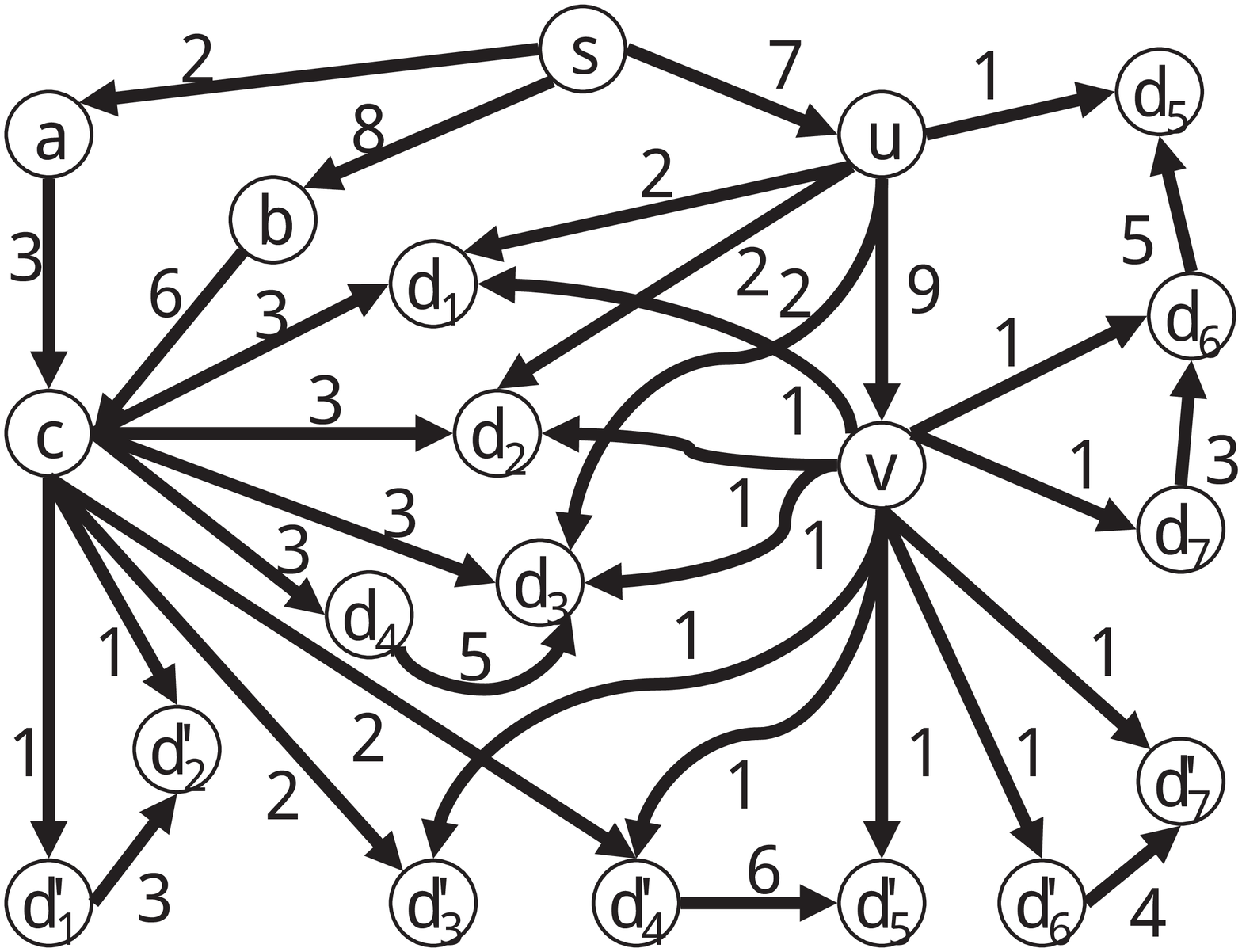}} \smallskip 
    %\vspace{-0.25in}
\subfigure[Shortest-path trees]{
    \label{fig1:subfig:b}
    \includegraphics[height = 3cm, width = 4cm]{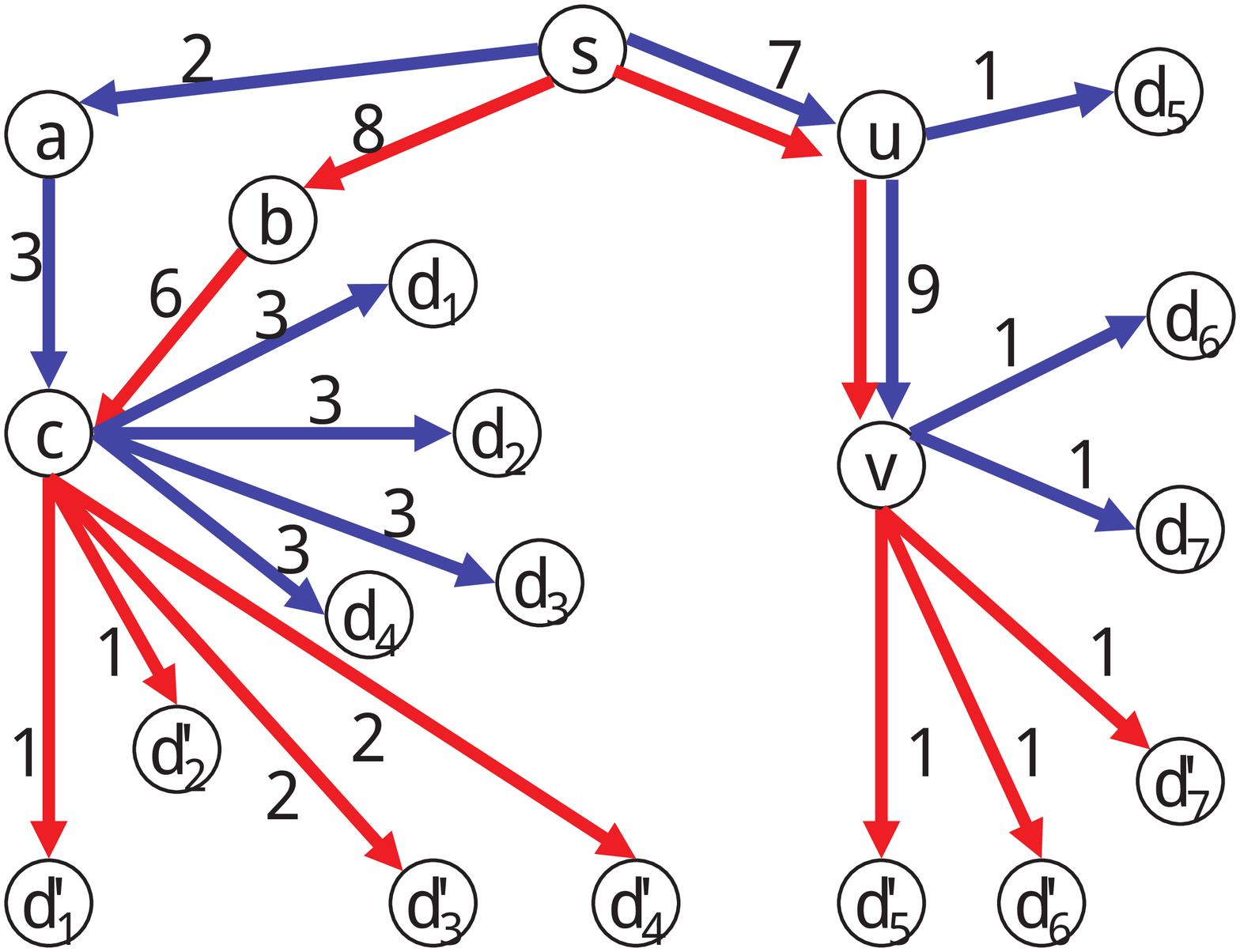}} \smallskip 
    %\vspace{-0.25in}
\subfigure[Steiner trees]{
    \label{fig1:subfig:c}
    \includegraphics[height = 3cm, width = 4cm]{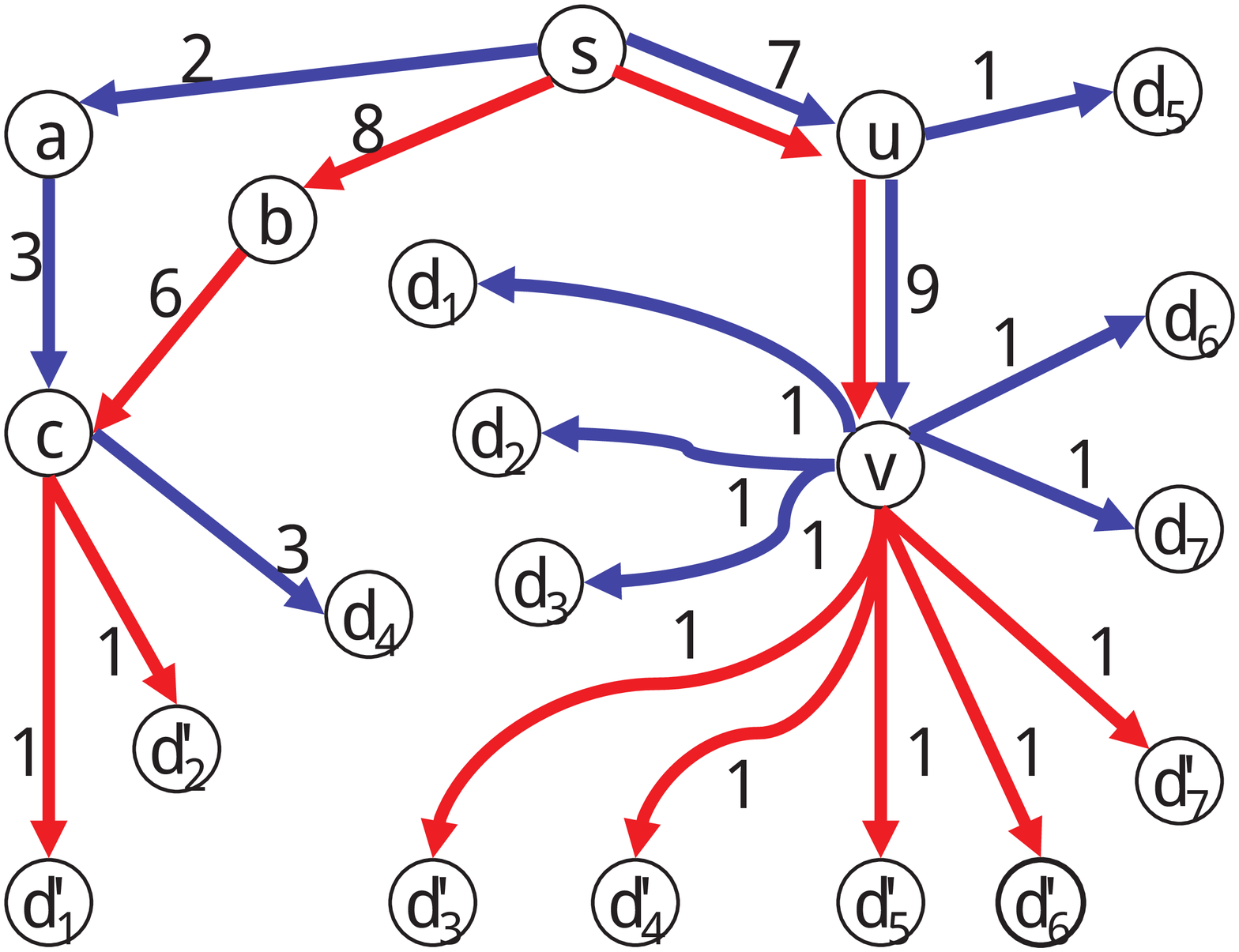}} \smallskip 
\subfigure[MTRSA]{
    \label{fig1:subfig:d}
    \includegraphics[height = 3cm, width = 4cm]{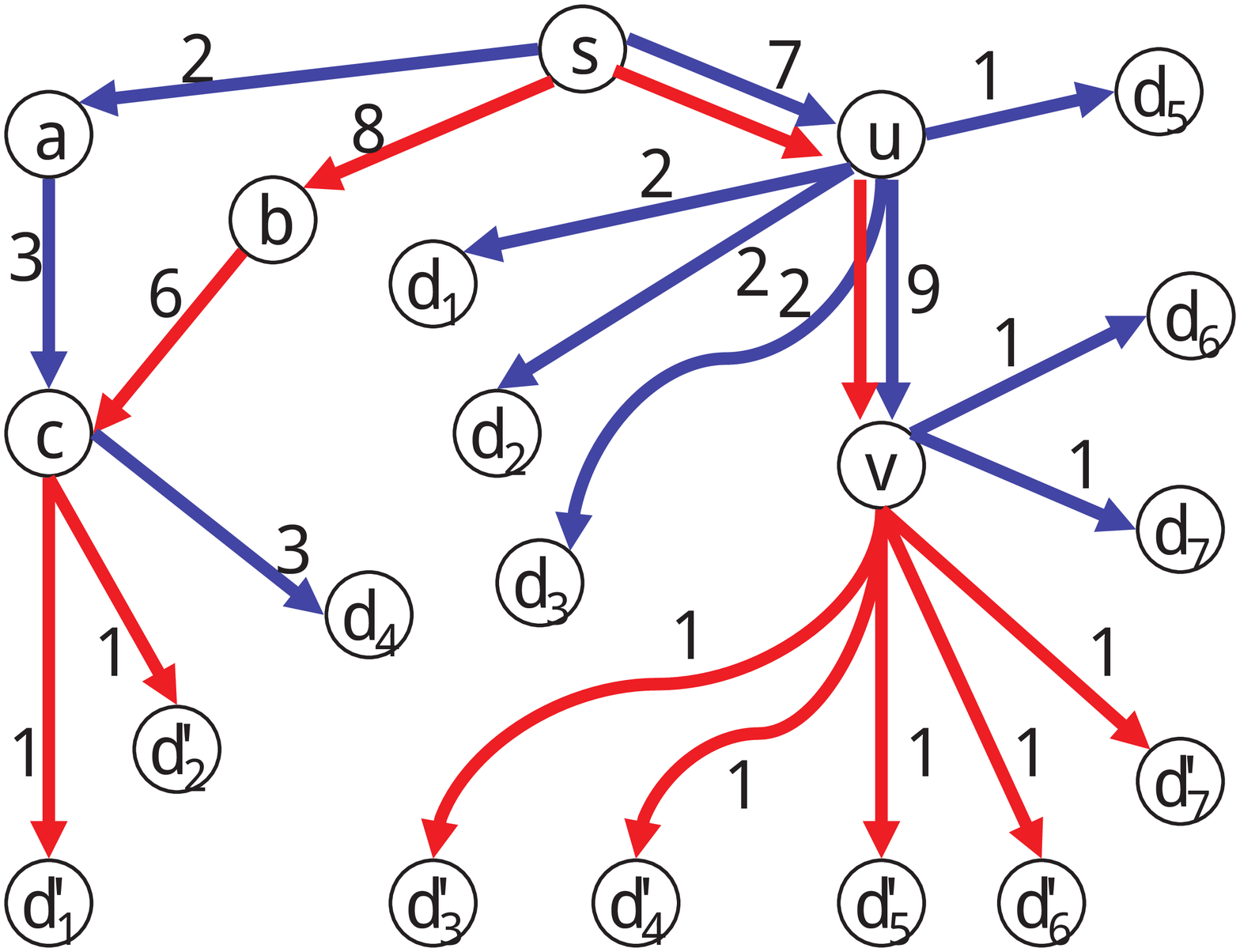}} \smallskip \vspace{%
-0.1in} %\vspace{-0.25in}
\caption{Comparison of different strategies for multicast traffic engineering}
\label{fig1:subfig}
\end{figure}

SMTE is very challenging. The ST problem is NP-Hard but can be approximated
within the ratio 1.55 \cite{robins_improved_2000} and is thus in APX of Complexity
Theory. In other words, there exists an approximation algorithm for ST that
can find a tree with the total cost at most 1.55 times of the optimal
solution. In contrast, we first prove that SMTE-N (i.e., SMTE with only the
node capacity constraint, while the link capacity constraint is relaxed) is
NP-Hard but cannot be approximated within $\delta $, which denotes the
number of destinations of the largest multicast group. Afterward, we prove
the SMTE (i.e., with both the link and node capacity) cannot be approximated
within any ratio. To solve SMTE-N, we propose a $\delta $-approximation
algorithm, named \textit{Multi-Tree Routing and State Assignment Algorithm
(MTRSA)}, which can be deployed in SDN-C. MTRSA includes two phases:
Multi-Tree Routing Phase and State-Node Assignment Phase, to effectively
minimize the total bandwidth cost of all trees according to the node
capacity constraint. We first focus on the node capacity (i.e., SMTE-N),
instead of the link capacity, because the scalability in Group Table is
unique and crucial for SDN and has not been explored in previous studies of multicast
tree routing for other networks. Since no $(\delta ^{1-\epsilon })$%
-approximation algorithm exists in SMTE-N for arbitrarily small $\epsilon >0$%
, MTRSA achieves the best approximation ratio. Afterward, we extend MTRSA to
support SMTE with the link capacity constraint.

The rest of this paper is organized as follows. Section 2 introduces the
related work. Section 3 and 4 formulate SMTE with Integer Programming and
describe the hardness results. We present the algorithm design of MTRSA in
Section 5, and Section 6 shows the simulation and implementation results on real topologies.
Finally, Section 7 concludes this paper.

\section{Related Work\label{sec: RelatedWork}}

\label{sec:related} The issues of traffic engineering for \textit{unicast
traffic} in SDN have attracted a wide spectrum of attention in the
literature. Sushant et al. \cite{jain_b4:_2013} developed private WAN of
Google Inc. with the SDN architecture. Qazi et al. \cite{qazi_simple-fying_2013} designed a
new system in SDN to control the middleboxes, and Mckeown et al. \cite%
{mckeown_openflow:_2008} studied the performance of OpenFlow in
heterogeneous SDN switches. Agarwal et al. \cite{agarwal_traffic_2013}
presented unicast traffic engineering in an SDN network with only a few
SDN-FEs, while the other routers in the network followed a standard routing
protocol, such as OSPF. However, the above studies focused on only unicast
traffic engineering, and multicast traffic engineering for multiple
multicast trees in SDN has attracted much less attention.

To support the multicast communications, the current multicast routing
standard PIM-SM~\cite{Imai2007} relies on unicast routing
protocols to discover the shortest paths from the
source to the destinations for building a shortest-path tree (SPT). However,
SPT is not designed to support traffic engineering. Although the Steiner
tree (ST) \cite{hwang_steiner_1992} minimizes the tree cost and the volume
of traffic in a network, ST is computationally intensive and is not
adopted in the current Internet standard. Overlay ST \cite%
{yang_bandwidth-efficient_2007, aharoni_restricted_1998}, on the other hand,
presents an alternative way to construct a bandwidth-efficient multicast
tree in the P2P environment. However, the path between any two P2P clients
is still a shortest path in Internet, and it is, therefore, difficult to
optimize the routing of the P2P tree. Most importantly, both SPT and ST are
designed to find the routing of a tree, instead of jointly optimizing the
resource allocation of multiple trees.

Flow table scalability is crucial to support large-scale SDN networks due to
the limited TCAM size. Kanizo et al.~\cite{kanizo_palette:_2013}, who 
showed that the major bottleneck in SDN is the restricted table sizes,
proposed a framework called Patette to decompose a large SDN table and
distribute its entries across a network. 
%To address the issues, many studies focus on traffic engineering for SDN networks.
Leng et al.~\cite{sdnflowtable} proposed a flow table reduction scheme (FTRS)
to reduce flow table usage with omnipotent controller functions. DIFANE~\cite%
{Yu:2010:SFN:1851182.1851224} distributed the flow entries to multiple SDN
switches. Zhang at al.~\cite{nfvmulticast} built a multicast topology (single backbone tree) for  NFV, while Craig et al. adjusted the link weights for shortest-path trees in SDN~\cite{loadbalance15}. Huang et al. also tried to optimize the routing of single multicast tree in SDN~\cite{7037084,7218381}.
Nevertheless, the above studies were not
designed for minimizing the total resource consumption in multicast traffic 
engineering with multiple trees subject to both the node and link capacity constraints in SDN.

\section{Problem Formulation\label{sec: ProblemFormulation}}

In this paper, we explore the \textit{Scalable Multicast Traffic Engineering}
(SMTE) problem for SDN. Given the data rate requirement of each multicast
group, SMTE aims to minimize the total bandwidth consumption of all
multicast groups in the network, by finding a tree connecting the source and
destinations of each group, and assigning the branch state nodes for each
tree, such that the number of multicast forwarding states will not
exceed the size of Group Table in each node, and the total multicast flows
on each edge will not exceed the link capacity. Note that a branch node can
only facilitate unicast tunneling for a multicast group if it is not
assigned as a branch state node in the corresponding multicast tree.

More specifically, given a network $G(V,E)$, where $V$ and $E$ denote the
set of nodes and directed edges, respectively, let $b_{v}$ denote the
maximal number of branch state nodes that can be maintained by node $v%
\footnote{%
In the following, we first assume that the memory size allocated in Group
Table\ to maintain the branch state node of each multicast tree is the same,
and later we extend it to the general scenario that supports different
memory sizes for different multicast trees according to the degrees of the node in the trees \cite{mckeown_openflow:_2008} in Section \ref{sec:
extension}.}$. Let $N_{v}^{+}$ ($N_{v}^{-}$) denote the set of out-neighbor
(in-neighbor) nodes of $v$ in $G$. Node $u$ is in $N_{v}^{+}$ ($N_{v}^{-}$)
if $e_{v,u}$ ($e_{u,v}$) is a directed edge from $v$ to $u$ (from $u$ to $v$%
) in $E$, and $c_{u,v}$ is the capacity of $e_{u,v}$, while $k_{u,v}$ is the
unit bandwidth cost of $e_{u,v}$. Let $\mathcal{T}=(T_{1},T_{2},\dots
,T_{t}) $ denote the set of multicast trees, while $s_{i}$ acts as the root
of tree $T_{i}\in \mathcal{T}$, i.e., the source with data rate $f_{i}$, and
the destination set $D_{i}$ contains the set of destinations in $T_{i}\in 
\mathcal{T}$. In the following, we first formally define SMTE, while the
derivation of the bandwidth consumption will be explained later in this
section in the proposed Integer Programming formulation. Dynamic group 
membership with user join and leave will be discussed later in Section~\ref{sec: extension}.

\begin{definition}
For network $G(V,E)$ and multicast groups $\mathcal{T}$, SMTE is to find the
routing of each tree $T_i$ in $\mathcal{T}$ spanning $s_{i}$ and $D_{i}$ and
assign the branch state nodes in $T_i$ to minimize the total bandwidth cost, such that each node $u$ acts as the
branch state nodes of at most $b_{u}$ trees, and total multicast bandwidth
consumption in each edge $e_{u,v}$ is at most $c_{u,v}$.
\end{definition}

In the following, we present the Integer Programming (IP) formulation for
SMTE. SMTE includes the following binary decision variables to find the
routing of each multicast tree and the assignment of branch state nodes. Let
binary variable $\pi _{i,d,u,v}$ denote if edge $e_{u,v}$ is in the path
from $s_{i}$ to a destination node $d$ in $D_{i}$ in $T_{i}$. Let integer
variable $\varepsilon _{i,u,v}$ denote the number of times that each packet
of $T_{i}$ is sent in edge $e_{u,v}$ via multicast (once) or unicast
tunneling (multiple times according to the number of tunnels). Let binary variable $\beta _{i,v},$ denote if $v$
is a branch state node in $T_{i}$. Intuitively, when we are able to find the
path from $s_{i}$ to each destination node $d$ of $T_{i}$ with $\pi
_{i,d,u,v}=1$ on every edge $e_{u,v}$ in the path, together with the set of state
branch nodes $\beta _{i,v}$, the routing of the tree (the set of edges $%
e_{u,v}$ with $\varepsilon _{i,u,v}\geq 1$) can be constructed according to
the paths from $s_{i}$ to all destination nodes in $D_{i}$.

The objective function of the IP formulation for SMTE is as follows.%
\begin{equation*}
\min \sum \limits_{1\leq i\leq t}\sum \limits_{e_{u,v}\in E}f_{i}\times
k_{u,v}\times \varepsilon _{i,u,v}.
\end{equation*}%
The objective function minimizes the total bandwidth cost of all multicast
trees. For each tree $T_{i}$, the following constraints first describe the
routing assignment (i.e., $\pi _{i,d,u,v}$) for the path connecting the
source $s_{i}$ and each destination in $D_{i}$. Afterwards, we assign the
branch nodes (i.e., $\beta _{i,u}$) in different nodes and then derive the
bandwidth consumption (i.e., $\varepsilon _{i,u,v}$) of $T_{i}$ via
multicast and unicast tunneling.

\begin{center}
\begin{tabular}{cc}
$\sum \limits_{v\in N_{s_{i}}^{+}}\pi _{i,d,s_{i},v}-\sum \limits_{v\in
N_{s_{i}}^{-}}\pi _{i,d,v,s_{i}}=1$, $\forall 1\leq i\leq t,d\in D_{i},$ & (1) \\ 
\vspace{2pt} $\sum \limits_{u\in N_{d}^{-}}\pi _{i,d,u,d}-\sum \limits_{u\in
N_{d}^{+}}\pi _{i,d,d,u}=1$, $\forall 1\leq i\leq t,d\in D_{i},$ & (2) \\ 
\vspace{2pt} $\sum \limits_{v\in N_{u}^{-}}\pi _{i,d,v,u}=\sum \limits_{v\in
N_{u}^{+}}\pi _{i,d,u,v}$, &  \\ 
$\forall 1\leq i\leq t,d\in D_{i},u\in V,u\neq d,u\neq s_{i},$ & (3) \\ 
\vspace{2pt} $\pi _{i,d,u,v}\leq \varepsilon _{i,u,v}$, $\forall 1\leq i\leq
t,d\in D_{i},\forall e_{u,v}\in E,$ & (4) \\ 
\vspace{2pt} $-|D_{i}|^2\times \beta _{i,u}+\sum \limits_{v\in
N_{u}^{+}}\varepsilon _{i,u,v}\leq \sum \limits_{v\in N_{u}^{-}}\varepsilon
_{i,v,u}$, & (5) \\ 
$\forall 1\leq i\leq t,u\in V,u\neq s_{i},$ &  \\ 
$\sum \limits_{1\leq i\leq t}\beta _{i,u}\leq b_{u}$, $\forall u\in V,$ & (6)
\\ 
$\sum \limits_{1\leq i\leq t}f_{i}\times \varepsilon _{i,u,v}\leq c_{u,v}$, $%
\forall e_{u,v}\in E.$ & (7)%
\end{tabular}
\end{center}

The first three constraints, i.e., (1), (2), and (3), are the
flow-continuity constraints for each tree $T_i$ to find the path from $s_{i}$
to every destination node $d$ in $D_{i}$. More specifically, $s_{i}$ is the
source node, and constraint (1) states that the net outgoing flow from $%
s_{i} $ is one, implying that at least one edge $e_{i,s_i,v}$ from $s_{i}$
to any neighbor node $v$ needs to be selected with $\pi _{i,d,s_i,v}=1$.
Note that here decision variables $\pi _{i,d,s_i,v}$ and $\pi _{i,d,v,s_i}$
are two different variables because the flow is directed. On the other hand,
every destination node $d$ is the flow destination, and constraint (2)
ensures that the net incoming flow to $d$ is one, implying that at least one
edge $e_{i,u,d}$ from any neighbor node $u$ to $d$ must be selected with $%
\pi _{i,d,u,d}=1$. For every other node $u$, constraint (3) guarantees that $%
u$ is either located in the path or not. If $u$ is located in the path, both
the incoming flow and outgoing flow for $u$ are at least one, indicating
that at least one binary variable $\pi _{i,d,v,u}$ is $1$ for the incoming
flow, and at least one binary variable $\pi _{i,d,u,v}$ is $1$ for the
outgoing flow. Otherwise, both $\pi _{i,d,v,u}$ and $\pi _{i,d,u,v}$ are $0$%
. Note that the objective function will ensure that $\pi _{i,d,v,u}=1$ for
at most one neighbor node $v$ to achieve the minimum bandwidth consumption.
In other words, both the incoming flow and outgoing flow among $u$ and $v$
cannot exceed $1$.

Constraints (4) and (5) are formulated to find the routing of the tree and
its corresponding branch state nodes, i.e., $\varepsilon _{i,u,v}$ and $%
\beta _{i,u}$. Constraint (4) states that $\varepsilon _{i,u,v}$ is at least 
$1$ if edge $e_{u,v}$ is included in the path from $s_{i}$ to at least one $%
d $, i.e., $\pi _{i,d,u,v}=1$. The tree $T_{i}$ is the union of the paths
from $s_{i}$ to all destination nodes in $D_{i}$. Constraint (5) is the most
crucial one. For each node $u$ in $T_{i}$, if it is not a branch state node,
i.e., $\beta _{i,u}=0$, $u$ does not maintain a forwarding entry of $T_{i}$
in Group Table and thereby facilitates unicast tunneling. In this case,
constraint (5) and the objective function guarantee that the number of
packets received from an incoming link $e_{v,u}$ must be the summation of
the number of packets sent to every outgoing link $e_{u,v}$. By contrast,
when $\beta _{i,u}=1$, constraint (5) becomes redundant because the
Left-Hand-Side (LHS) is smaller than $0$ and thereby imposes no restrict on
the Right-Hand-Side (RHS). In this case, constraint (4) ensures that $%
\varepsilon _{i,v,u}=1$ for every incident edge $e_{v,u}$ with $\pi
_{i,d,v,u}$ as 1. Therefore, $u$ is multicast capable for $T_{i}$, and each
packet is delivered once in every incident link.

The last two constraints are capacity constraints. Constraint (6) states
that each node $u$ can act as a branch state node of at most $b_{u}$ trees
in $\mathcal{T}$, while constraint (7) describes that the total multicast
bandwidth consumption of in each directed edge $e_{v,u}$ cannot exceed $%
c_{u,v}$.

%%%%%%%%%%%%%%%%%%%%%%%%%%%%%%%%%%%%%%%%%%%%%%%%%%%%%%%%%%%

\section{Hardness Results\label{sec: HardnessResults}}

In the following, we first show that SMTE-N is very challenging in
Complexity Theory by proving that it is NP-Hard and not able to be
approximated within $\delta ^{c}$ for every $c<1$, where $\delta
=\max_{1\leq i\leq t}|D_{i}|$. Afterward, we prove that SMTE cannot be
approximated within any ratio.

The Steiner tree problem is a special case of SMTE-N. However, SMTE-N is
much more challenging than the Steiner tree problem because the Steiner tree
problem can be approximated within ratio $1.55$ and is thus in APX in
Complexity Theory. In contrast, we find out that SMTE is much more difficult
to be approximated. The following theorem first proves that SMTE-N cannot be
approximated within $\delta ^{c}$ for every $c<1$, where $\delta
=\max_{1\leq i\leq t}|D_{i}|$, by a gap-introducing reduction from the 3SAT
problem.

\begin{theorem}
\label{hardness} For any $\epsilon >0$, there exists no $(\delta
^{1-\epsilon })$-approximation algorithm for SMTE-N, where $\delta
=\max_{1\leq i\leq t}|D_{i}|$, assuming P $\neq $ NP.
\end{theorem}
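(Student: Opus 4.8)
The plan is to construct a gap-introducing reduction from 3SAT. The hardness claim says that no $(\delta^{1-\epsilon})$-approximation exists, which by the standard gap framework means I must build, from a 3SAT formula $\phi$, an instance of SMTE-N such that: if $\phi$ is satisfiable, the optimal cost is some small value $\text{OPT}_{\text{yes}}$, and if $\phi$ is unsatisfiable, the optimal cost is at least $\delta^{1-\epsilon}$ times larger, i.e., $\text{OPT}_{\text{no}} \geq \delta^{1-\epsilon} \cdot \text{OPT}_{\text{yes}}$. Distinguishing these two cases would let an approximation algorithm with ratio better than $\delta^{1-\epsilon}$ decide 3SAT, contradicting $\text{P} \neq \text{NP}$.

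**First I would** set up the gadget graph. For a 3SAT instance with variables $x_1,\dots,x_n$ and clauses $C_1,\dots,C_m$, I would introduce a source $s$, a collection of variable gadgets that force a truth-assignment choice, and clause-destination nodes. The key is to exploit the node-capacity constraint $b_v$: by setting the Group Table capacity of certain branch nodes to be very tight (e.g., $b_v$ so small that only literals consistent with a single satisfying assignment can be served via cheap multicast), I can make the two branches of each variable gadget mutually exclusive. A node that would need to be a branch state node for conflicting literals cannot store all required forwarding entries, so at least one path is forced to use expensive unicast tunneling. The gap is then amplified by attaching a large fan-out of destinations (controlling $\delta$) so that the penalty for even a single unsatisfied clause — realized as forced tunneling duplication on a heavily-loaded edge or through a high-cost detour — blows the cost up by the target factor $\delta^{1-\epsilon}$.

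**The technical heart** is the gap amplification: I need to arrange costs so that a satisfying assignment yields a tree (or family of trees) where every branch node serving many destinations can be a branch state node (cheap, each packet sent once per edge), while an unsatisfiable formula forces at least one branch node to fall back on unicast tunneling, duplicating packets proportionally to its roughly $\delta$ downstream destinations and thereby multiplying the cost. The reduction parameters (number of destinations per gadget, unit edge costs, and the capacities $b_v$) must be tuned so that the multiplicative gap is exactly $\delta^{1-\epsilon}$ for the chosen $\epsilon$; typically this requires making $\delta$ polynomially large in $n,m$ and possibly replicating the construction or padding with dummy destinations so that $\delta^{1-\epsilon}$ dominates any additive slack in the yes-instance cost.

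**The main obstacle I anticipate** is establishing the lower bound $\text{OPT}_{\text{no}} \geq \delta^{1-\epsilon}\cdot\text{OPT}_{\text{yes}}$ cleanly — i.e., proving that \emph{every} feasible solution on an unsatisfiable instance is expensive, not merely that the natural one is. I must rule out clever reroutings that avoid the forced tunneling by taking alternate paths; this is where the link structure and the $b_v$ settings must be rigid enough that no feasible tree can simultaneously satisfy all destinations cheaply unless the underlying assignment satisfies $\phi$. Handling this step — a careful case analysis showing any cheap solution decodes into a satisfying assignment — is the crux, and the tightness of the $\delta^{1-\epsilon}$ bound (as opposed to a weaker constant or $\log$ factor) hinges on getting the fan-out and capacity numbers exactly right.
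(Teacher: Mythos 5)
Your overall strategy---a gap-introducing reduction from 3SAT with literal gadgets, clause destinations, and a polynomially large fan-out to amplify the gap to $\delta^{1-\epsilon}$---is exactly the route the paper takes, and you correctly identify that the crux is bounding \emph{every} feasible solution from below in the unsatisfiable case. However, there is a genuine missing idea at the center of the construction: you never say where the contention for Group Table entries comes from. In SMTE-N the node capacity $b_v$ bounds the number of \emph{trees} for which $v$ may be a branch state node; within a single multicast tree a branch node consumes only one entry, so with one tree and $b_v=1$ every branch node can freely be a state node and no unicast tunneling is ever forced. Your phrase ``a node that would need to be a branch state node for conflicting literals'' does not resolve this, because a single node in a single tree never conflicts with itself. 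The paper's fix is to use \emph{two} multicast trees sharing the source $s$: $T_1$ must reach the clause destinations, while $T_2$ must reach auxiliary destinations $w_i^{(k)}$ each adjacent to both $u_i$ and $\overline{u_i}$. Then $T_2$ necessarily consumes one literal node of every complementary pair as a heavy branch node, and in the unsatisfiable case some pair $u_{i^*},\overline{u_{i^*}}$ is forced to serve many destinations in \emph{both} trees; with capacity $1$ at least one of the four (node, tree) roles must be stateless, and tunneling across the expensive $s$-to-$U$ edge (cost $p^q$) multiplied by the $p^{q-1}$ downstream destinations yields the $p^{2q-1}$ lower bound against the $4p^{q+1}$ yes-instance cost.

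A second, smaller gap is in how the penalty is concentrated. Unsatisfiability only guarantees that for each (copy of the) clause set, \emph{some} variable is used in both polarities by $T_1$---but it could be a different variable for each copy, in which case no single literal node accumulates enough downstream destinations to pay the $\delta^{1-\epsilon}$ factor. The paper replicates each clause $p^q$ times and applies the pigeonhole principle to find one index $i^*$ that is doubly used in at least $p^q/n$ copies, so that $u_{i^*}$ and $\overline{u_{i^*}}$ each carry at least $p^{q-1}$ downstream destinations in $T_1$. Your mention of ``replicating the construction or padding with dummy destinations'' gestures at this, but the pigeonhole step is what actually makes the lower bound go through, and it should be stated explicitly along with the choice $q\geq(3+\log_p 4)/\epsilon$ that turns the ratio $p^{2q-1}/(4p^{q+1})$ into $(\max\{|D_1|,|D_2|\})^{1-\epsilon}$.
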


\begin{proof}
We prove the theorem with the gap-introducing reduction from the 3SAT
problem.

The 3SAT problem is a simplification of the regular SAT problem. An instance
of 3SAT is a conjunctive normal form (CNF) in which each clause contains
exactly three variables. The 3SAT problem is to decide, given a Boolean
expression $\phi $ in CNF such that each clause contains exactly three
variables, whether $\phi $ is satisfiable.

For any instance $\phi $ of the 3SAT problem, we build an instance $G(V,E)$
of SMTE-N with two multicast trees, where the destination sets are $D_{1}$
and $D_{2}$. Let $\mathrm{OPT}(G)$ denote the optimal solution of $G$ for
SMTE-N. The goals of the reduction are two-fold. 1) If $\phi $ is satisfiable then $\mathrm{OPT}(G)\leq 4p^{q+1}$. 2) If $\phi$ is not satisfiable then $\mathrm{OPT}(G)>(4p^{q+1})\times(\max \{|D_1|,|D_2|\})^{1-\epsilon}$. In the above two goals, $n$ is the number of Boolean variables in $\phi $, $m$ is the number of clauses in $\phi $, $p=\max \{m,n\}$ and $q$ is a large number (derived later).

To achieve the above goals, we build the instance of SMTE-N from each
instance of the 3SAT problem as follows. Given an instance $\phi $ of 3SAT
with $n$ Boolean variables $x_{1},\ldots ,x_{n}$ and $m$ clauses $%
C_{1},\ldots ,C_{m}$, we construct a directed graph $G(V,E)$ in the following way. 1) The node set $V$ is partitioned into four node sets $\{s\}$, $U$, $D_1$, and $D_2$. 2) $U$ includes $2n$ nodes $u_{1},\overline{u_{1}},u_{2},\overline{u_{2}},\ldots ,u_{n},\overline{u_{n}}$ (nodes $u_{i}$ and $\overline{u_{i}}$ correspond to the Boolean variable $x_{i}$), and for each $i$ with $1\leq i\leq n$, there are directed edges $(s,u_{i})$ and $(s,\overline{u_{i}})$. 3) $D_{1}$ has $mp^{q}$ nodes $d_{j}^{(k)}$, where $1\leq j\leq m$ and $1\leq k\leq p^{q}$ (nodes $d_{j}^{(k)}$, $1\leq k\leq p^{q}$, corresponding to $p^{q}$ copies of the clause $C_{j}$), and there exists a directed edge $(u_{i},d_{j}^{(k)})$ ($(\overline{u_{i}},d_{j}^{(k)})$) if and only if the variable $x_{i}$ ($\overline{x_{i}}$, resp.) appears in $C_{j}$. 4) $D_2$ contains $np^q$ nodes $w_i^{(k)}$, where $1\le i\le n$ and $1\le k\le p^q$, and there are directed edges $(u_{i},w_i^{(k)})$ and $(\overline{u_{i}},w_i^{(k)})$ for each $i,k$ with $1\le i\le n$ and $1\le k\le p^q$. Note that $G$ only has the directed edges described above, $p=\max \{m,n\}$, and $q$ is the smallest integer such that $q\geq(3+\log _{p}4)/\epsilon $. Fig.~\ref{fig:reduction_3SAT} presents an illustrative example of an SMTE-N instance.

The cost of each edge from $s$ to $U$ is set as $p^{q}$, and the cost of the
other edges are set to be $1$. The capacity of each node is set as $1$. Let $%
s$ and $D_{1}$ be the source node and destination set of $T_{1}$
respectively, and let $s$ and $D_{2}$ be the source node and destination set
of $T_{2}$ respectively.

If $\phi $ is satisfiable, there is a truth assignment to $x_{i}$ such that $%
\phi $ is true. Let $A=\{u_{i}:\text{$x_{i}$ is assigned to be true}\} \cup \{%
\overline{u_{i}}:\text{$x_{i}$ is assigned to be false}\}$. Consider the
tree $T_{1}$ rooted at $s$ that includes 1) the edges between $s$ and $A$, and 2) the edges between $d_{j}^{(k)}$ and
one of its neighbor in $A$ (the existence of its neighbor in $A$ comes from
that $\phi $ is satisfiable). Consider the tree $T_{2}$ that includes 1) the
edges between $s$ and $U\setminus A$, 2) the edges between $U\setminus A$ and $D_{2}$. Then
$(T_{1},T_{2})$ is a feasible solution of SMTE-N, and it can act as an upper
bound of SMTE-N in $G$. The total edge cost of $T_{1}$ is $np^{q}+mp^{q}$,
and the total edge cost of $T_{2}$ is $np^{q}+np^{q}$. Since the node
capacity is sufficient, the total bandwidth cost of this feasible solution
is $3np^{q}+mp^{q}\leq 4p^{q+1}$. Hence, we have $\mathrm{OPT}(G)\leq
4p^{q+1}$.

On the other hand, if $\phi $ is not satisfiable, let $(T_{1},T_{2})$ be any
feasible solution. For $1\leq k\leq p^{q}$, let $I_{k}$ be the set
consisting of every $i$ with $1\leq i\leq n$, such that $u_{i}$ and $%
\overline{u_{i}}$ are adjacent to some nodes in $\{d_{j}^{(k)}:1\leq j\leq
m\}$ along the edges in $T_{1}$. Since $\phi $ is not satisfiable, $I_{k}$
is not empty for each $k$ with $1\leq k\leq p^{q}$. By pigeonhole principle~\cite{cameron1994combinatorics}, 
there exists at least one $i^{\ast }$ with $1\leq
i^{\ast }\leq n$ such that $i^{\ast }$ is in at least $\frac{p^{q}}{n}$ sets
of $\left \{ I_{1},I_{2},...,I_{p^{q}}\right \} $. In $T_{1}$, therefore, $u_{i^{\ast
}}$ has at least $\frac{p^{q}}{n}\geq p^{q-1}$ downstream destination nodes, and $\overline{u_{i^{\ast }}}$ has at least $\frac{p^{q}}{n}\geq p^{q-1}$ downstream  destination nodes. On the other hand, in $T_{2}$, $u_{i^{\ast }}$ and $\overline{u_{i^{\ast }}}$ need to dominate $p^{q}$ downstream destination nodes of $T_{2}$. If $u_{i^{\ast }}$ or $\overline{u_{i^{\ast }}}$ is not a branch state node in $T_{1}$, then  the total cost is at least $p^{2q-1}$. On the other hand, if $u_{i^{\ast }}$ and $\overline{u_{i^{\ast }}}$ both are branch state nodes in $T_{1}$, since the capacity of node $u_{i^{\ast }}$ and $\overline{u_{i^{\ast }}}$ are 1, neither $u_{i^{\ast }}$ nor $\overline{u_{i^{\ast }}}$ are branch state nodes in $T_{2}$. The total cost is at least $p^{2q}$.

Therefore, the total cost of the optimal solution is larger than $p^{2q-1}$,
and we have $\mathrm{OPT}(G)>p^{2q-1}=(4p^{q+1})(p^{q-2-\log
_{p}4})=(4p^{q+1})(p^{q+1})^{\frac{q-2-\log _{p}4}{q+1}%
}=(4p^{q+1})(p^{q+1})^{1-\frac{3+\log _{p}4}{q+1}}\geq
(4p^{q+1})(p^{q+1})^{1-\epsilon }\geq (4p^{q+1})(\max
\{|D_{1}|,|D_{2}|\})^{1-\epsilon }$. Since $\epsilon $ can be arbitrarily
small, for any $\epsilon >0$, there is no $(\max
\{|D_{1}|,|D_{2}|\})^{1-\epsilon }$ approximation algorithm for SMTE-N,
assuming P $\neq $ NP. The theorem follows.
\end{proof}

\begin{figure}[t]
\centering
\includegraphics[width=2.3 in]{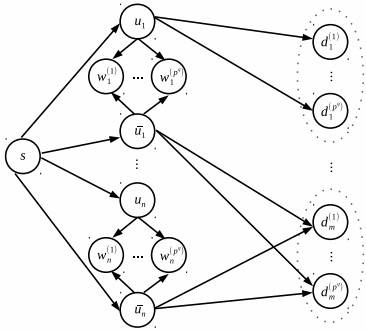} 
\vspace{-5pt}% \vspace{-15pt}
\caption{An illustration of instance building from 3SAT to SMTE-N}
\label{fig:reduction_3SAT}
\end{figure}

%%%%%%%%%%%%%%%%%%%%%%%%%%%%%%%%%%%%%%%%%%%%%%%%%%%%%%%%%%%%%%%%%%%%%%%%%%%%%%%
In the following, we prove that SMTE cannot be approximated within any ratio.

\begin{theorem}
For any polynomial time computable function $f$, SMTE cannot be approximated
within a factor of $f(|V|)$, unless P = NP. In other words, for arbitrary
positive integer $k$, SMTE cannot be approximated within $|V|^{k}$.
\end{theorem}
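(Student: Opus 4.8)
The plan is to reduce from 3SAT once more, reusing almost verbatim the graph $G(V,E)$ built in the proof of Theorem~\ref{hardness}, but this time turning on the link capacity constraint~(7) so that the polynomial cost blow-up of the unsatisfiable case hardens into outright \emph{infeasibility}. The reason this suffices is that ``inapproximable within any ratio'' is, for a problem with positive edge costs, equivalent to the statement that deciding feasibility is NP-hard: any purported polynomial-time $f(|V|)$-approximation algorithm must return a feasible solution of finite cost whenever $\mathrm{OPT}(G)<\infty$ and must fail to do so otherwise, and hence would decide feasibility. I will therefore aim to show that $\phi$ is satisfiable if and only if the constructed SMTE instance admits \emph{any} feasible solution.

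The capacities I would impose are as follows: keep $b_v=1$ at every node, set $f_1=f_2=1$, set the capacity of each edge $(s,u_i)$ and $(s,\overline{u_i})$ to $1$, and leave every other edge uncapacitated. The single structural lemma driving the proof is that, because the only edge entering $u_i$ (resp.\ $\overline{u_i}$) originates at $s$, constraint~(7) reads $\varepsilon_{1,s,u_i}+\varepsilon_{2,s,u_i}\le 1$, so \emph{at most one of the two trees may use $u_i$ at all, and only a single copy may cross that edge}. Consequently, in any feasible solution the node sets $S_1,S_2\subseteq U$ touched by $T_1$ and $T_2$ are disjoint, and every node of $U$ carrying two or more downstream destinations of its tree is forced to be a branch state node of that tree, since otherwise unicast tunneling would place $\ge 2$ copies on its incoming $s$-edge.

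For the forward direction I would route $T_1$ through the literal set $A$ and $T_2$ through the complementary literals exactly as in Theorem~\ref{hardness}; disjointness of $A$ and $U\setminus A$ makes $S_1,S_2$ disjoint, each used $s$-edge carries one multicast copy, and since $b_v=1$ the required branch state nodes fit, so the solution is feasible. For the converse, suppose a feasible solution exists. Covering all destinations $w_i^{(k)}$ of $T_2$ forces, for every $i$, at least one of $u_i,\overline{u_i}$ into $S_2$ (these are the only in-neighbors of the $w_i^{(k)}$); by disjointness $S_1$ then contains at most one literal per variable, i.e.\ $S_1$ encodes a consistent partial assignment. Covering every clause-copy node $d_j^{(k)}$ of $T_1$ forces $S_1$ to contain a literal of each $C_j$, so this assignment satisfies $\phi$. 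Thus feasibility implies satisfiability, which closes the equivalence.

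Finally I would argue by contradiction: given a polynomial-time $f(|V|)$-approximation algorithm, running it on the constructed instance returns a finite-cost feasible solution precisely when $\phi$ is satisfiable and reports infeasibility otherwise, thereby deciding 3SAT in polynomial time and forcing P~$=$~NP. The step I expect to be most delicate is the converse direction: I must verify that the disjointness lemma is genuinely forced by~(7) for \emph{every} feasible routing --- including those that split the $w_i^{(k)}$ between $u_i$ and $\overline{u_i}$, or that reroute $T_1$ to free a node --- and that no combination of multicast and unicast tunneling can slip two trees, or two copies of one tree, through a capacity-$1$ $s$-edge.
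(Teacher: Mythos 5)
Your proof is correct, but it converts the gadget into inapproximability by a different mechanism than the paper. The combinatorial core is the same in both arguments: capacity-$1$ edges $(s,u_i)$, $(s,\overline{u_i})$ force $T_1$ and $T_2$ to partition the literal nodes, and unsatisfiability forces $T_1$ to occupy both $u_i$ and $\overline{u_i}$ for some $i$, starving $T_2$. The paper, however, keeps every instance \emph{feasible} by adding escape edges $(s,d_i')$ of cost $(m+3n)\times f(|V|)$, so that starvation of $T_2$ translates into an explicit multiplicative cost gap of $f(|V|)$ between the satisfiable and unsatisfiable cases; it also uses the small graph (one node per clause, no $p^q$ blow-up), node capacity $2$, and capacity $1$ on \emph{all} edges. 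You instead delete the escape route so that unsatisfiability becomes outright infeasibility, and then invoke the standard meta-argument that NP-hardness of deciding feasibility rules out polynomial-time approximation within any ratio. Your route proves something formally stronger and is independent of $f$, but it leans on the convention that an approximation algorithm must return a feasible solution on every feasible instance; the paper's gap construction avoids that reliance and exhibits the factor $f(|V|)$ directly in the objective. Two small points to tidy up: the $p^q$ replication inherited from Theorem~\ref{hardness} is unnecessary for your argument (a single copy of each $d_j$ and $w_i$ suffices, and you never fix $q$), and you should note that $s$ is a branch node of both trees while $b_s=1$ --- this is harmless only because constraint (5) exempts the source, a fact your forward direction implicitly uses.
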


%\textbf{****** Proof of Theorem 2 ******}

\begin{proof}
Assume, for a contradiction, that there is a polynomial time approximation
algorithm $\mathcal{A}$ with the approximation raio  $f(|V|)$ for SMTE. This
proof will show that $\mathcal{A}$ can be used for deciding the 3SAT problem
in polynomial time, thus implying P = NP. 

Specifically, given a graph $G$, let $\mathrm{OPT}(G)$ denote the optimal
solution of $G$ for SMTE. For any instance of the 3SAT problem, we build an
instance $G(V,E)$ for SMTE with two multicast trees with the destination
sets $D_{1}$ and $D_{2}$. The goals of the reduction are two-fold:

\begin{enumerate}
\item if $\phi $ is satisfiable then $\mathrm{OPT}(G)\leq m + 3n$, and

\item if $\phi $ is not satisfiable then $\mathrm{OPT}(G) > (m + 3n) \times
f(|V|)$,
\end{enumerate}

where $n$ is the number of Boolean variables, $m$ is the number of clauses,
and $f$ is a polynomial-time computable function. 
%Hence there is no $f(|V|)$ approximation algorithm for SMTE unless P $= $ NP.

To achieve the above goals, we build the instance of SMTE from each instance
of the 3SAT problem. Given an instance $\phi $ of 3SAT with $n$ Boolean
variables $x_{1},\dots ,x_{n}$ and $m$ clauses $C_{1},\dots ,C_{m}$, we
construct a directed graph $G(V,E)$ such that

\begin{enumerate}
\item the node set $V$ is partitioned into four node sets $s$, $U$, $D_{1}$,
and $D_{2}$;

\item $U$ contains $2n$ nodes $u_{1}, \overline{u_{1}}, u_{2}, \overline{%
u_{2}}, \dots, u_{n}, \overline{u_{n}}$ (nodes $u_{i}$ and $\overline{u_{i}}$
are corresponding to the Boolean variable $x_{i}$), and for each $i$ with $1
\leq i \leq n$, there are directed edges $(s, u_{i})$ and $(s, \overline{%
u_{i}})$;

\item $D_{1}$ contains $m$ nodes $d_{1},\dots ,d_{m}$ (node $d_{j}$
corresponds to the clause $C_{j}$), and there exists a directed edge $%
(u_{i},d_{j})$ ($(\overline{u_{i}},d_{j})$) if and only if the variable $%
x_{i}$ ($\overline{x_{i}}$, resp.) appears in $C_{j}$;

\item $D_{2}$ contains $n$ nodes $d^{\prime}_{1}, \dots, d^{\prime}_{n}$ and
for each $i$ with $1 \leq i \leq n$, there are directed edges $(s,
d^{\prime}_{i})$, $(u_{i}, d^{\prime}_{i})$, and $(\overline{u_{i}},
d^{\prime}_{i})$;

\item $G$ only has the directed edges described above.
\end{enumerate}

The cost of each edge from $s$ to $D_{2}$ is set as $(m+3n)\times f(|V|)$ ,
and the cost of every other edge is set to be 1. The capacity of each
directed edge is set to be 1, and the data rate of each tree is also 1. The
node capacity is set as 2.

Let $s$ and $D_{1}$ be the source node and destination set of $T_{1}$,
respectively Let $s$ and $D_{2}$ be the source node and destination set of $%
T_{2}$ respectively.

If $\phi $ is satisfiable, there is a truth assignment to $x_{i}$ such that $%
\phi $ is true, let $W=\{u_{i}:x_{i}$ is assigned to be true$\} \cup \{%
\overline{u_{i}}:x_{i}$ is assigned to be false$\}$. Consider the tree $T_{1}
$ rooted at $s$ including 1) the edges between $s$ and $W$ and 2) the edge
between each $d_{j}$ and one of its neighbor in $W$ (the existence of its
neighbor in $W$ comes from that $\phi $ is satisfiable). Consider the tree $%
T_{2}$ which includes 1) the edges between $s$ and $U\setminus W$ and 2) the
edges between $U\setminus W$ and $D_{2}$. Then $(T_{1},T_{2})$ is a feasible
solution of SMTE and it can act as an upper bound of SMTE in $G$. The total
edge cost of $T_{1}$ is $m+n$ and the total edge cost of $T_{2}$ is $2n$.
Since the node capacity is sufficient, the total bandwidth cost of this
feasible solution is $m+3n$. Hence $\mathrm{OPT}(G)\leq m+3n$.

On the other hand, if $\phi $ is not satisfiable, let $(T_{1},T_{2})$ be any
feasible solution. Since $\phi $ is not satisfiable, there is an $i$ such
that both edges $(s,u_{i})$ and $(s,\overline{u_{i}})$ appear in $T_{1}$ for
any feasible solution of SMTE, in order to span all destinations in $D_{1}$.
Therefore, the edges $(s,u_{i})$ and $(s,\overline{u_{i}})$ cannot be
included in $T_{2}$ due to the link capacity constraint, and $T_{2}$ thereby
needs to choose the directed edge $(s,d_{i}^{\prime })$. The total edge cost
of $T_{2}$ is at least $(m+3n)\times f(|V|)$ in this case, and the total
bandwidth cost of the optimal solution in SMTE is greater than $(m+3n)\times
f(|V|)$. Therefore SMTE cannot be approximated within a factor of $f(|V|)$,
unless P = NP.
\end{proof}

\section{Algorithm Design}

In the following, we first propose a $\delta $-approximation algorithm, named 
\textit{Multi-Tree Routing and State Assignment Algorithm (MTRSA)}, for
SMTE-N, where $\delta =\max_{1\leq i\leq t}|D_{i}|$. Note that we first
focus on the node capacity, instead of the link capacity, because the
scalability in Group Table is crucial in SDN and has not been explored in
previous studies of multicast tree routing for other networks. Since Theorem %
\ref{hardness} proves that there is no $(\delta^{1-\epsilon })$-approximation
algorithm of SMTE-N for any $\epsilon >0$, MTRSA achieves the best
approximation ratio. Afterward, we extended it to support SMTE. 

%Due to the
%space constraint, the pseudo-code is presented in \cite{extension}.

\subsection{Algorithm Description}
\label{sec:aglo_des}

MTRSA includes two phases: 1) Multi-Tree Routing Phase and 2) State-Node
Assignment Phase. Multi-Tree Routing Phase first constructs an initial
multicast tree for each multicast group to minimize the total bandwidth
consumption and balance the distribution of branch nodes in different trees.
State-Node Assignment Phase then finds the branch state nodes for each
multicast tree to follow the node constraint.

\subsubsection{Multi-Tree Routing Phase}

Initially, Multi-Tree Routing Phase constructs a shortest-path tree with
source $s_{i}$ and destination set $D_{i}$ for each tree $T_{i}\in \mathcal{T%
}$. A node $u$ is \emph{full} if it acts as a branch node for $b_{u}$
multicast trees. By contrast, $u$ is \emph{overloaded} if it acts as a
branch node for more than $b_{u}$ trees. In this case, $u$ needs to act as a
branch stateless node for some of those trees and thereby will incur more
bandwidth consumption. To address this issue, after finding the
shortest-path trees, if there is an overloaded node, we adjust the local
tree routing nearby the overloaded node to move the branch node to another
node that has not been full, in order to balance the distribution of branch
nodes among different multicast trees.

More specifically, if any node $u$ is an overloaded node and a branch node
in any tree $T_{i}$, MTRSA chooses a node $v$ of $T_{i}$ such that: 1) $v$
is a downstream to $u$ in $T_{i}$, 2) $v$ is a branch node or a destination
node of $T_{i}$, and 3) there is no other branch node or destination node in
the path from $u$ to $v$ in $T_{i}$. In other words, $v$ is a nearby
downstream branch node of $u$ and a destination node. MTRSA reroutes the path (from $u$ to $v$) to
another path (from $w$ to $v$) as follows, in order to alleviate the storage
load in $u$. Let $\ell $ denote the total bandwidth cost of the path from $u$
to $v$ in $T_{i}$. We find a new path from $w$ to $v$ such that: 1) the
total cost of the new path is at most $\ell $, 2) the new path does not pass
through any exiting node in $T_{i}$, and 3) this new path starts from an
on-tree node $w$ such that i) it is not a leaf node of $T_i$, and ii) it is not full or overloaded. We update
tree $T_{i}$ by substituting the old path from $u$ to $v$ in $T_{i}$ with
the new path from $w$ to $v$, and the overload situation in $u$ can be
alleviated accordingly. Afterward, we process every other downstream branch
node $v$ of $u$ until $u$ is no longer a branch node for $T_{i}$. The above
process is repeated for every tree $T_{i^{\prime }}$ iteratively until $u$
is no longer overloaded.

\textbf{Example. } Consider the following example in Fig. \ref{fig2:subfig:a}%
. Let $G(V,E)$ be the network with two multicast trees $T_{1}$ and $T_{2}$ with the data rate as 1.
The number on each edge is the unit bandwidth cost of this edge, and the node
capacity of each node is $1$. The source $s_{1}$ of the first tree $T_{1}$
is $s$ with the corresponding destination set $D_{1}=\{d_{1},d_{2},\dots
,d_{8}\}$, while the source $s_{2}$ of $T_{2}$ is also $s$, but the
destination set is $D_{2}=\{d_{1}^{\prime },d_{2}^{\prime },\dots
,d_{6}^{\prime }\}$. In Multi-Tree Routing Phase, we first find the blue and
red shortest-path trees $T_{1}$ and $T_{2}$ in Fig. \ref{fig2:subfig:b}.
Afterward, we adjust the multicast trees for overloaded nodes. Specifically,
the node capacity of $a$ is $1$, but $a$ is a branch node of both $T_{1}$
and $T_{2}$. Therefore, $a$ is an overloaded node, and MTRSA examines nodes $%
d_{1},d_{2},v,c$, which are downstream nodes of $a$ in $T_{1}$. MTRSA first
reroutes the path $\left \{ a,b,c\right \} $ in tree $T_{1}$. Since node $y$
is overloaded, node $c$ cannot be rerouted from node $y$. In contrast, node $%
v$ is a full branch node of $T_{1}$, and MTRSA reroutes node $c$ from node $v$
for $T_{1}$ as shown in Fig. \ref{fig2:subfig:c}. Note that the bandwidth
cost is efficiently reduced since the new path from $v$ to $c$ is much smaller than
the one from $a$ to $c$. Therefore, Multi-Tree Routing Phase addresses both
the node capacity and the bandwidth consumption for scalable multicast
traffic engineering.

\subsubsection{State-Node Assignment Phase}

It is worth noting when the network is heavily loaded, the first phase may
not be able to ensure that every overloaded node can be successfully
adjusted to balance the distribution of branch nodes in different trees, and
State-Node Assignment Phase is crucial in this case to minimize the
increment of bandwidth consumption due to unicast tunneling through branch
stateless node. More specifically, State-Node Assignment Phase includes two
stages: 1) Greedy Assigning Stage, and 2) Local Search Stage. Greedy
Assigning Stage assigns the branch state nodes by iteratively maximizing the
reduction of the number of branch state nodes, and later in Section \ref%
{sec: AppRatio} we prove that the number of branch state nodes reduced
by the Greedy Assigning Stage is at least half of the number of branch state
nodes reduced by an optimal strategy. Local Search Stage then improves the
solution by further alleviating the assignment on overloaded nodes and
rerouting the trees to further reduce the total bandwidth cost. We detail
the two stages as follows.

For each multicast tree $T_{i}$ obtained in Multi-Tree Routing Phase, let $%
W_{i}$ denote the set of branch nodes in $T_{i}$, and $W=\cup _{1\leq i\leq t}W_{i}$. On the other hand,
let $A_{i}$ be the set of branch state nodes in $T_{i}$ to be decided in
this phase, and $A_{i}$ thereby is a subset of $W_{i}$. Let $c(T_{i},A_{i})$
denote the total bandwidth cost of $T_{i}$ with the set of branch state
nodes as $A_{i}$. More precisely, $c(T_{i},A_{i})=\sum_{v\in A_{i}\cup
D_{i}}c(P_{v})$, where $P_{v}$ is the path from the closest upstream branch
state node in $A_{i}$ or the source to $v$, such that all internal nodes of $%
P_{v}$ are not in $A_{i}$, and $c(P_{v})$ is the cost of all edges in $P_{v}$%
. In other words, if there is no branch stateless node in $P_{v}$, every
packet is delivered only once on every link of $P_{v}$. By contrast, if $%
P_{v}$ includes a branch stateless node $u$, each packet is sent multiple
times on the links from the closest upstream branch state node to $u$,
corresponding to the unicast tunneling case.

An assignment $A$ of branch state nodes can be defined as follows: $A$ is a $%
0,1$-matrix with the rows indexed by $\{1,\ldots ,t\}$ and columns indexed
by $W$, such that 1) the $1$'s in row $i$ can only be the columns indexed in 
$W_{i}$, and 2) the number of $1$'s in column $w\in W$ is no more than the
node capacity $b_{w}$. We assign a branch state node $w\in W$ to tree $T_{i}$
if and only if the $(i,w)$ entry of $A$ is $1$. In other words, the first
condition ensures that a branch state node can only be assigned to a branch
node of $T_{i}$, while the second condition is the node capacity constraint.
Given an assignment $A$ of branch state nodes, let $A_{i}=\{w\in W:\text{the 
$(i,w)$ entry of $A$ is $1$}\}$ denote the set of branch state nodes for $%
T_{i}$, and the total bandwidth cost for the set $\mathcal{T}$ of all
multicast trees with the state-node assignment $A$ is $c(\mathcal{T}%
,A)=\sum_{1\leq i\leq t}c(T_{i},A_{i})$. Since an assignment $A$ of branch
state nodes can also be regarded as a subset of $N$, where $N=\{1,\ldots
,t\} \times W$, let $\mathcal{M}$ be the family of subsets of $N$ satisfying
the above two conditions (hence $\mathcal{M}$ is the family of all feasible
assignments of branch state nodes to $\mathcal{T}$), and we use $c(\mathcal{T%
},\varnothing )$ to denote the total bandwidth cost of $\mathcal{T}$ without
assigning any branch state node. Now let the set function $z:\mathcal{M}%
\rightarrow \mathbb{R}$ such that $z(A)$ represents the cost reduced by
assignment $A$. More formally, $z(A)=c(\mathcal{T},\varnothing )-c(\mathcal{T%
},A)$ for each $A\in \mathcal{M}$.

The above matrix representation plays a crucial role in Greedy Assigning
Stage when we prove the quality of the state-node assignment based on Matroid
Theory later in Section \ref{sec: AppRatio}. This stage starts from a branch
state node assignment $\varnothing $ and cost $c(\mathcal{T},\varnothing )$,
and iteratively assigns one branch state node for a tree in $\mathcal{T}$
until no more assignment can reduce $c(\mathcal{T},A)$. More precisely, in
each iteration, if the present branch state node assignment is $A\in 
\mathcal{M}$, we choose an element $x$ in $N-A$ such that: 1) $A\cup
\{x\}$ is in $\mathcal{M}$ and follows the node capacity constraint, and 2) $%
z(A\cup \{x\})=\max_{y\in (N-A)}z(A\cup \{y\})$. In other words, the first
condition guarantees that the new assignment is feasible, whereas the second
condition chooses the node leading to the maximal reduction on $c(\mathcal{T}%
,A)$. Afterward, Local Search Stage first adjusts the assignment of branch
state nodes for overloaded nodes iteratively. In each iteration, we first
extract an overloaded node $u$ and then compute the reduction of the
bandwidth cost with a branch state node assigned to $u$ for each tree $T$
spanning $u$, assuming that the state-node assignment of other nodes are not
changed. Afterward, this phase sorts the trees according to the bandwidth
reduction and chooses the $b_{u}$ trees with the largest reduction, whereas
the branch state nodes are assigned to them accordingly. This stage is
repeated until all overloaded nodes are carefully examined. Afterward, this
stage reroutes the paths from other branch nodes of a tree in order to find
a smaller tree with the same assignment of branch state nodes. More
specifically, for any branch node $u$ in tree $T_{i}$, we choose nodes $v$
and $w$ of $T_{i}$ in the same way as the Multi-Tree Routing Phase in order
to find a new path from $w$ to $v$, and $w$ is not full.

\textbf{Example.} In Greedy Assignment Stage of the State-Node Assignment
Phase, when there is no branch state node, the total bandwidth cost in Fig. %
\ref{fig2:subfig:c} is $c(\mathcal{T},\varnothing )=c(T_{1},\varnothing
)+c(T_{2},\varnothing )=142+92=234$. If we assign a branch state node on $v$
for tree $T_{1}$, the bandwidth cost of the path $s,a,u,v$ can be reduced by 
$(4-1)$ times since there are $4$ downstream destination nodes $%
d_{3},d_{4},d_{5},d_{6}$ of $v$ in $T_{1}$. The reduced cost is the largest
among all possible branch state node assignments. Therefore, MTRSA first
assigns a branch state node on $v$ for tree $T_{1}$ with the cost reduced by 
$(4-1)\times (k_{s,a}+k_{a,u}+k_{u,v})=63$. It then assigns a branch state
node on $y$ to $T_{2}$ with the cost reduced by $(3-1)\times
(k_{s,w}+k_{w,y})=30 $. Afterward, node $a$ is assigned as a branch state
node for $T_{2}$ with the cost reduced by $(3-1)\times (k_{s,a})=18$, and
node $c$ is assigned as a branch state node for $T_{2}$ with the cost reduced by 
$(2-1)\times (k_{a,b}+k_{b,c})=7$. In Local Search Stage, there are three
overloaded nodes $a$, $c$, and $y$. For overloaded node $a$, this phase
moves the branch state node on $a$ from $T_{2}$ to $T_{1}$ without changing
the branch state nodes of the other nodes. If we assign a branch state node
on $a$ to $T_{1}$, it becomes possible to reduce the cost of $T_{1}$ by $%
(3-1)\times (k_{s,a})=18$. In contrast, if we assign a branch state node on $%
a$ to $T_{2} $, we are able to reduce the cost of $T_{2}$ by only $(2-1)\times
(k_{s,a})=9$. Nodes $c$ and $y$ are then processed similarly.\textbf{\ }%
Finally, in Fig. \ref{fig2:subfig:d}, since node $y$ has been a branch state
node, node $c$ can be re-routed to node $y$ in $T_{2}$, and the total
bandwidth consumption is reduced from $107$ in Fig. \ref{fig2:subfig:c} to $%
93$ in Fig. \ref{fig2:subfig:d} accordingly.

\begin{figure}[t]
\centering
\subfigure[Original network]{
    \label{fig2:subfig:a}
    \includegraphics[height = 3cm, width = 4cm]{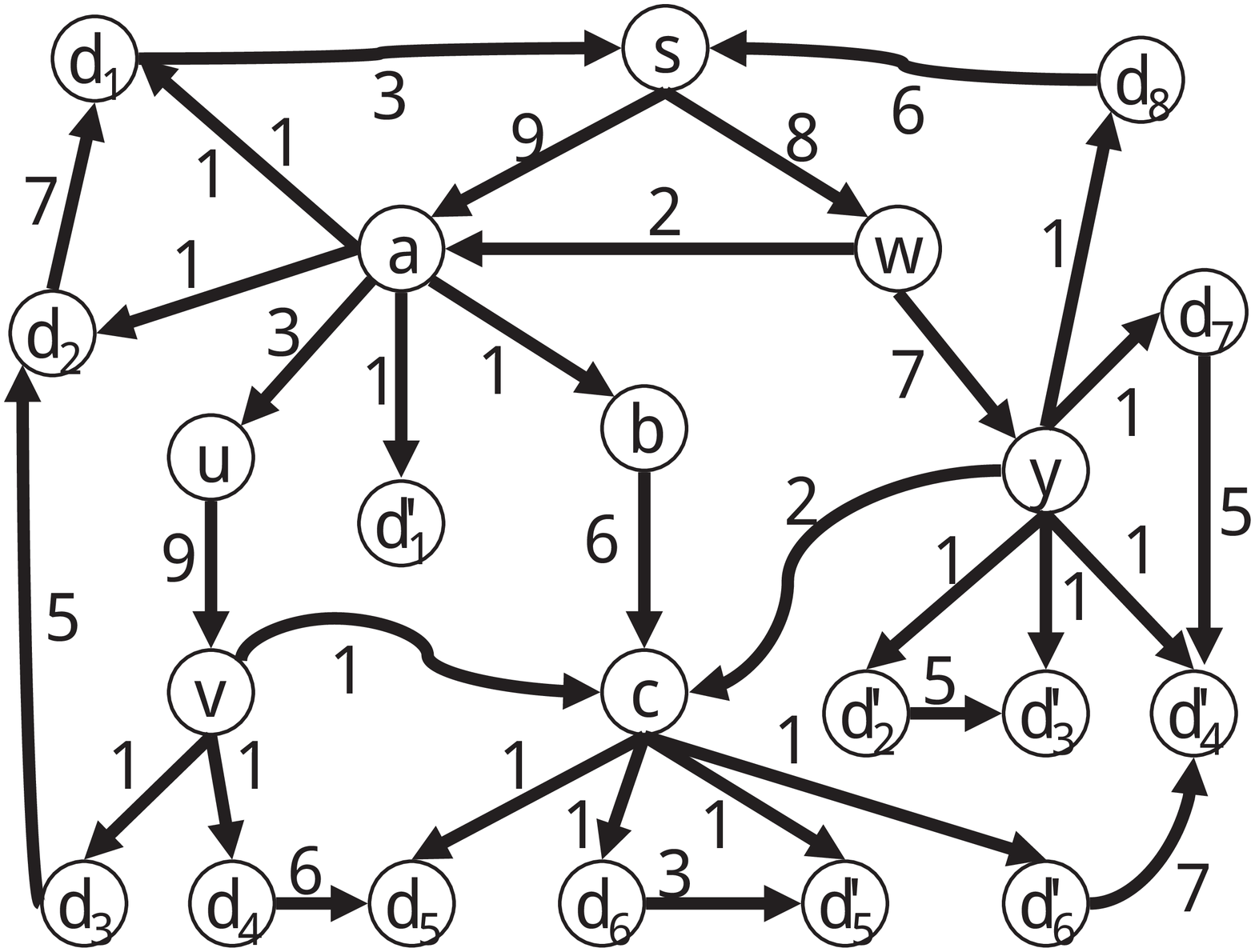}} \smallskip 
\subfigure[Shortest-path trees]{
    \label{fig2:subfig:b}
    \includegraphics[height = 3cm, width = 4cm]{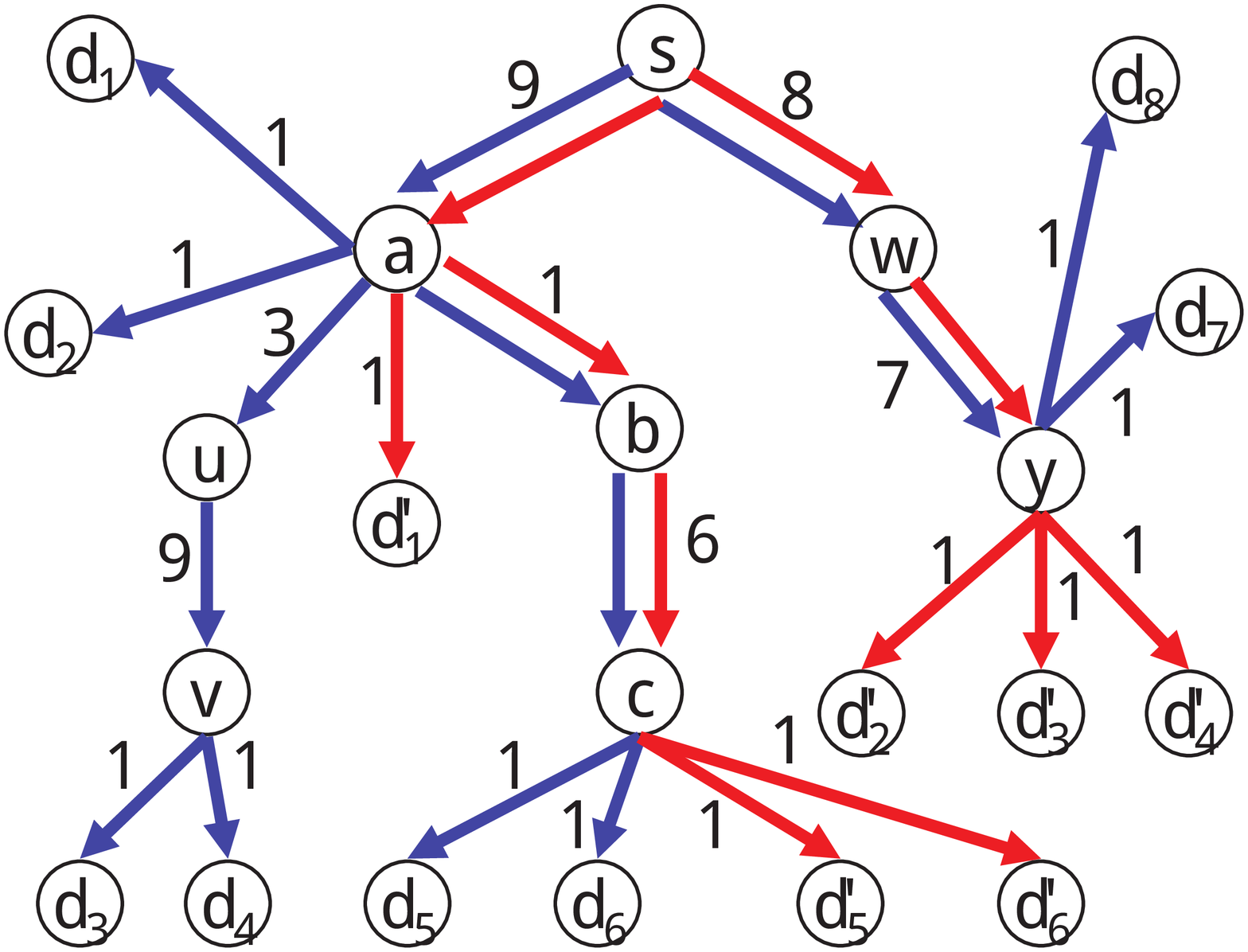}} \smallskip 
\subfigure[Multi-Tree Routing Phase]{
    \label{fig2:subfig:c}
    \includegraphics[height = 3cm, width = 4cm]{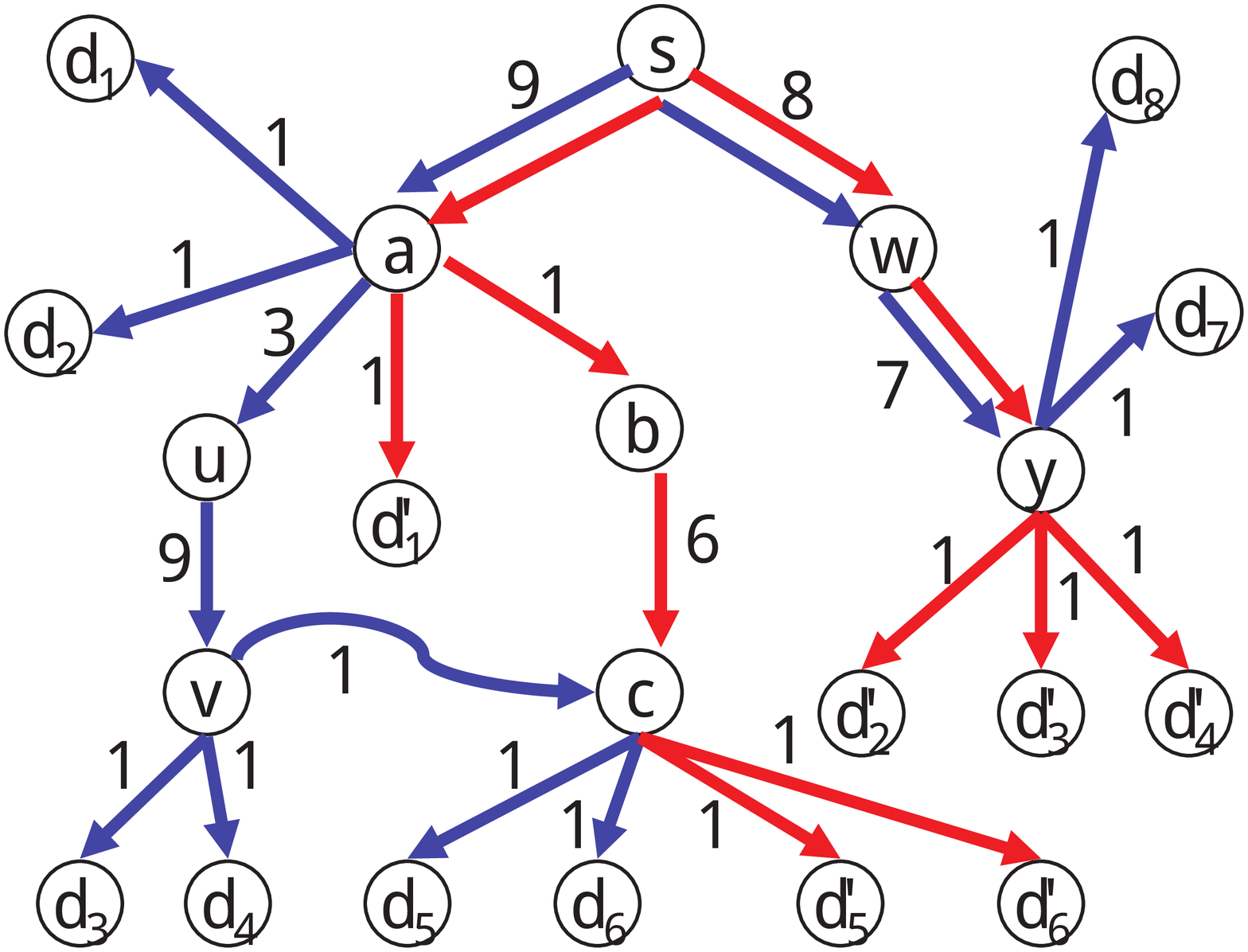}} \smallskip 
\subfigure[State-Node Assignment Phase]{
    \label{fig2:subfig:d}
    \includegraphics[height = 3cm, width = 4cm]{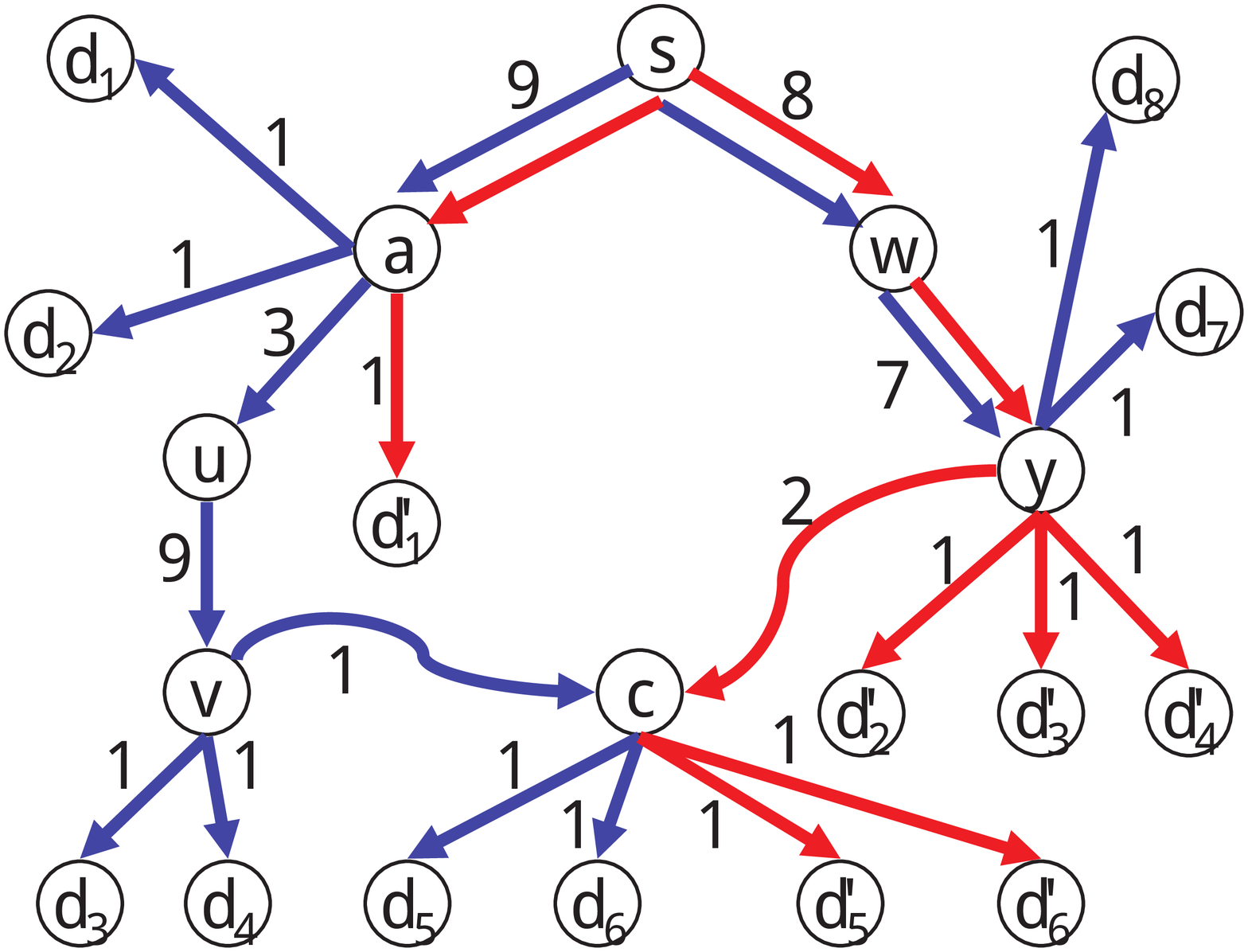}} \smallskip \vspace{%
-0.1in} %\vspace{-0.25in}
\caption{An example of MTRSA}
\label{fig2:subfig}
\end{figure}

\subsection{Approximation Ratio and Time Complexity\label{sec: AppRatio}}

In the following, we first examine the quality of assignment for branch
state nodes in the second phase. We prove that $(N,\mathcal{M})$ is a
matroid and the set function $z:\mathcal{M}\rightarrow \mathbb{R}$ is a
nondecreasing submodular set function. Therefore, according to the Matroid
Theorem for maximizing submodular set function~\cite{greedy78}, we
have the following theorem.

\begin{theorem}
\label{greedy assign} The number of branch state nodes reduced by the Greedy
Assignment Stage is at least one half of the branch state nodes reduced by
the optimal assignment of branch state nodes.
\end{theorem}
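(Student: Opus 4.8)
The plan is to reduce the theorem to the classical guarantee of Fisher, Nemhauser, and Wolsey~\cite{greedy78}: the greedy algorithm that maximizes a nondecreasing submodular set function over the independent sets of a matroid returns a solution whose value is at least one half of the optimum. The Greedy Assigning Stage is exactly this procedure applied to the reduction function $z$ over the set system $(N,\mathcal{M})$, since in each iteration it adds the feasible element $x\in N-A$ maximizing $z(A\cup\{x\})$. Hence it suffices to establish two structural facts: that $(N,\mathcal{M})$ is a matroid, and that $z:\mathcal{M}\rightarrow\mathbb{R}$ (the reduction in total bandwidth cost, $z(A)=c(\mathcal{T},\varnothing)-c(\mathcal{T},A)$) is nondecreasing and submodular. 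The theorem then follows immediately, the greedy $z$-value being at least half of the optimal $z$-value.

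First I would verify that $(N,\mathcal{M})$ is a partition matroid. Partition the ground set $N=\{1,\ldots,t\}\times W$ into blocks indexed by the column $w\in W$; condition (2) in the definition of $\mathcal{M}$ states exactly that a feasible assignment contains at most $b_{w}$ elements from the block of $w$, while condition (1) merely forbids the pairs $(i,w)$ with $w\notin W_{i}$ (equivalently, treats them as loops). This is the standard partition matroid, so I would check the three axioms directly: $\varnothing\in\mathcal{M}$; the hereditary property, since deleting elements cannot violate any per-column capacity; and the exchange property, since if $A,B\in\mathcal{M}$ with $|A|<|B|$ then by counting some column holds strictly more elements of $B$ than of $A$ and is therefore unsaturated in $A$, yielding an element of $B\setminus A$ that may be added to $A$.

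Next I would prove that $z$ is nondecreasing and submodular. Using the decomposition $z(A)=\sum_{1\leq i\leq t}\bigl(c(T_{i},\varnothing)-c(T_{i},A_{i})\bigr)$, and the facts that a sum of nondecreasing functions is nondecreasing and a sum of submodular functions is submodular, it is enough to treat a single tree. Monotonicity is the easy direction: designating an additional branch node of $T_{i}$ as a branch state node replaces several duplicated unicast tunnels on the segment above it by a single multicast copy, which can only decrease $c(T_{i},A_{i})$. For submodularity I would analyze the marginal gain of making a node $w$ a branch state node in terms of the path decomposition $c(T_{i},A_{i})=\sum_{v\in A_{i}\cup D_{i}}c(P_{v})$: this gain equals the packet duplications eliminated on the segment from $w$'s closest upstream branch state node down to $w$, namely (the number of downstream terminals, i.e.\ destinations or state nodes, served through $w$ without an intervening branch state node) minus one, times the cost of that segment. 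I would then show this gain is nonincreasing as the state-node set grows from $A_{i}$ to $B_{i}\supseteq A_{i}$, which gives $z(A\cup\{x\})-z(A)\geq z(B\cup\{x\})-z(B)$ for every $x\notin B$.

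The main obstacle is this last step. The matroid axioms and monotonicity are routine, but proving that $w$'s marginal saving can only shrink requires a careful case analysis of how an already-present state node interacts with $w$: an added state node lying between the source and $w$ shortens the upstream segment on which $w$ saves, while an added state node lying below $w$ merges several of $w$'s immediate downstream terminals into one and thus lowers the multiplicity carried through $w$, ``stealing'' part of its saving. Since both the segment cost and the downstream-terminal count are then nonnegative and nonincreasing in the state-node set, their product (the marginal gain) is nonincreasing, which is precisely the diminishing-returns inequality. Once the matroid property and the nondecreasing submodularity of $z$ are in place, the one-half guarantee is immediate from~\cite{greedy78}.
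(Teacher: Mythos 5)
Your proposal is correct and follows essentially the same route as the paper's proof: verify that $(N,\mathcal{M})$ is a (partition) matroid, show that $z$ is nondecreasing and submodular, and invoke the $1/2$ greedy guarantee for maximizing a nondecreasing submodular function over a matroid from~\cite{greedy78}. The only difference is that you supply the explicit diminishing-returns case analysis (upstream state nodes shrink the saved segment, downstream state nodes reduce the duplication multiplicity) that the paper merely asserts, which strengthens rather than changes the argument.
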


%%%%%%%%%  Extension Begin %%%%%%%%%%%%%%%%%%%%
\begin{proof}
In the following, we first prove that $\mathcal{M}$ is a matroid. $\mathcal{M%
}$ is the family of subsets of $N=\{1,\dots ,t\} \times W$ (i.e., we put the
elements of $N$ in a $t\times |W|$ array) such that the elements in the $i$%
-th row are in the columns indexed by $W_{i}$, and the number of elements in
the column indexed by $w$ is at most the capacity of $w$. Hence, by
definition, we have: 1) $\varnothing \in \mathcal{M}$, 2) If $A\subseteq
B\in \mathcal{M}$, then $A\in \mathcal{M}$, and 3) If $A,B\in \mathcal{M}$
with $|A|<|B|$, then there is an element $b\in B$ such that $A\cup \{b\} \in 
\mathcal{M}$. Therefore $\mathcal{M}$ is a matroid.

Now we prove that the set function $z:\mathcal{M}\subset 2^{N}\rightarrow 
\mathbb{R}$ is submodular and nondecreasing. Let $A,B\in \mathcal{M}$ with $%
A\subseteq B$ and $c\in N$ be the element in row $i$ and column $w$ such
that $A\cup \{c\},B\cup \{c\} \in \mathcal{M}$, since $z(A\cup \{c\})-z(A)$
is the cost reduced by assigning a branch state node in node $w$ to tree $%
T_{i}$ with branch state node assignment $A$, and since $z(B\cup \{c\})-z(B)$
is the cost reduced by assigning a branch state node in node $w$ to tree $%
T_{i}$ with branch state node assignment $B$, we have $z(A\cup
\{c\})-z(A)\geq z(B\cup \{c\})-z(B)$. Hence, $z$ is submodular. Let $A,B\in 
\mathcal{M}$ with $A\subseteq B$, by definition of $z$, we have $z(A)\leq
z(B)$, and $z$ thereby is nondecreasing.

Let $Z_{OPT}$ be $\max \{z(A):A\in \mathcal{M}\}$ and $Z_{G}$ be the result
from our algorithm. By a result on maximizing submodular set function on
matroid~\cite{greedy78}, we have $\frac{Z_{OPT}-Z_{G}}{Z_{OPT}-z(\varnothing )}%
\leq \frac{1}{2}$. Hence, $Z_{G}\geq \frac{1}{2}Z_{OPT}$. The theorem
follows.
\end{proof}

%%%%%%%%%  Extension Begin %%%%%%%%%%%%%%%%%%%%

%\begin{proof}
%Due to the space constraint, the detailed proof is presented in \cite{extension}.
%\end{proof}

Then, we prove that MTRSA is a $\delta $-approximation algorithm
for SMTE-N, where $\delta $ is the maximum size of the destination sets.
Since Theorem \ref{hardness} proves that there is no approximation algorithm
with ratio $\delta ^{1-\epsilon }$ for any $\epsilon >0$, the following
theorem shows that MTRSA achieves the best approximation ratio. In contrast,
since SMTE cannot be approximate within any ratio unless $P=NP$, it is
impossible to derive an approximation ratio for any algorithm of SMTE, and
we thereby evaluate MTRSA for SMTE in Section~\ref{sec:evaluation}.

\begin{theorem}
\label{approx} MTRSA is a $\delta $-approximation algorithm for SMTE-N,
where $\delta =\max_{1\leq i\leq t}|D_{i}|$.
\end{theorem}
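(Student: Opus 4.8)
The plan is to prove that MTRSA produces a feasible solution whose total bandwidth cost is at most $\delta$ times the optimal cost of SMTE-N. I would structure the argument around two comparisons: a lower bound on the optimal cost $\mathrm{OPT}$, and an upper bound on the cost of the tree produced by MTRSA, expressed in terms of the same underlying shortest-path-tree quantities.

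First I would establish a clean lower bound on $\mathrm{OPT}$. For each tree $T_i$, every destination $d\in D_i$ must be reached from $s_i$, and the cheapest way to reach a single destination is along a shortest path. Hence any feasible solution for SMTE-N must pay, tree by tree, at least the cost needed to connect the source to all destinations; in particular, ignoring the sharing of edges among destinations, the cost of reaching the destinations of $T_i$ is bounded below by the cost of the corresponding shortest-path tree divided by $\delta$, since each of the at most $\delta$ destination paths can overlap the tree but a single edge is counted at most once in the tree cost while it may be traversed up to $\delta$ times when destinations are served individually. More precisely I would argue $\mathrm{OPT}\ge \frac{1}{\delta}\sum_i c(\mathrm{SPT}_i)$, where $\mathrm{SPT}_i$ is the shortest-path tree for group $i$; this follows because the shortest-path tree is a lower bound on the routing cost of \emph{any} tree (each destination is connected by a path no shorter than its shortest path), and a tree edge serving up to $\delta$ destinations contributes its cost at most $\delta$ times to the sum of individual shortest-path costs.

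Next I would upper-bound the cost of MTRSA's output. The key observation is that even in the worst case — when no branch state node can be assigned and every branch node resorts to unicast tunneling — each packet on an edge of $T_i$ is duplicated at most $|D_i|\le\delta$ times, because at most $\delta$ destinations lie downstream of any edge. Therefore $c(T_i,A_i)\le \delta\cdot c(\mathrm{SPT}_i)$ for the MTRSA tree, and since the Multi-Tree Routing Phase starts from shortest-path trees and only performs rerouting steps that do not increase cost (the new path from $w$ to $v$ has cost at most $\ell$, the old path cost), the routing cost of MTRSA's tree is at most that of the initial shortest-path tree. Summing over all trees gives $c(\mathcal{T},A)\le\delta\sum_i c(\mathrm{SPT}_i)$.

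Combining the two bounds yields $c(\mathcal{T},A)\le\delta\sum_i c(\mathrm{SPT}_i)\le\delta^2\cdot\mathrm{OPT}$ in the crudest form, so the delicate part of the argument is tightening this to a single factor of $\delta$ rather than $\delta^2$. The hard part will be arguing that the shortest-path-tree cost and the optimal cost are related by a factor of $\delta$ in a way that composes correctly with the at-most-$\delta$ duplication from tunneling, rather than multiplying the two losses. I expect the clean route is to bound MTRSA's cost directly against $\mathrm{OPT}$ edge by edge: on each edge, the number of packet copies MTRSA sends is at most $\delta$ times the number the optimal solution sends on an edge that must carry the corresponding flow, because the optimal solution must still route every destination's traffic through the network and the shortest-path structure guarantees MTRSA's per-destination routing is no more expensive than optimal's per-destination routing. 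Establishing this per-edge or per-destination charging scheme — so that the $\delta$ appears exactly once — is the main obstacle, and it hinges on the fact that feasibility of SMTE-N places no link capacity constraint, so the optimal tree cannot avoid connecting each destination and its cost is bounded below by the aggregate shortest-path cost without the extra $\delta$ factor.
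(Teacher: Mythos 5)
Your proposal does not close the argument: as you yourself note, the two bounds you actually establish compose to a factor of $\delta^2$, and the step you sketch to tighten this to $\delta$ does not work as stated. The specific problem is the final claim that the optimal cost is ``bounded below by the aggregate shortest-path cost without the extra $\delta$ factor,'' i.e.\ $c(T_i^{\ast},A_i^{\ast})\ge\sum_{d\in D_i}c(P_{s_i,d})$. This is false: a single tree edge can serve up to $|D_i|$ destinations, so the optimal tree cost can be smaller than the sum of the individual shortest-path costs by a factor of up to $\delta$ (consider a long path from $s_i$ to a hub followed by $\delta$ short edges to the destinations). If that lower bound were true, your own upper bound on MTRSA's cost would yield a $1$-approximation, contradicting Theorem~\ref{hardness}. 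The correct per-destination lower bound is only $c(T_i^{\ast},A_i^{\ast})\ge c(T_i^{\ast})\ge c(P_{s_i,d})$ for \emph{each single} destination $d$, which after summing over the at most $\delta$ destinations gives $\sum_{d\in D_i}c(P_{s_i,d})\le |D_i|\cdot c(T_i^{\ast},A_i^{\ast})$.

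The fix — and this is what the paper does — is to choose the pivot quantity to be $\sum_{d\in D_i}c(P_{s_i,d})$ (the sum of individual shortest-path costs) rather than the shortest-path \emph{tree} cost $c(\mathrm{SPT}_i)$ with each edge counted once. That single quantity simultaneously upper-bounds MTRSA's output, since $c(\mathcal{T}^{(3)},A^{(3)})\le c(\mathcal{T}^{(1)},\varnothing)\le\sum_i\sum_{d\in D_i}c(P_{s_i,d})$ (full unicast tunneling on the initial shortest-path trees, combined with the fact that rerouting and state assignment never increase cost), and is at most $\delta\cdot c(\mathcal{T}^{\ast},A^{\ast})$ by the per-destination bound above. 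Routing both comparisons through $c(\mathrm{SPT}_i)$, as you do, forces you to pay a factor $\delta$ on each side (once for tunneling duplication, once for the SPT-versus-OPT gap), which is exactly where your $\delta^2$ comes from; with the right pivot the factor $\delta$ appears exactly once and no per-edge charging scheme is needed. A minor additional point: to justify that the rerouting phase does not increase cost you need to fix an intermediate state assignment $A$ (the paper uses the assignment on non-overloaded nodes, with $A\subseteq A^{(3)}$) so that the chain $c(\mathcal{T}^{(3)},A^{(3)})\le c(\mathcal{T}^{(2)},A)\le c(\mathcal{T}^{(1)},A)\le c(\mathcal{T}^{(1)},\varnothing)$ is well defined; your sketch elides this.
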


\begin{proof}
Let the set of multicast trees $\mathcal{T}^{\ast }=(T_{1}^{\ast },\dots
,T_{t}^{\ast })$ with the assignment $A^{\ast }$ of branch state nodes be
the optimal solution to SMTE-N, and $W^{\ast }=\cup _{i=1}^{t}W_{i}^{\ast }$
with $W_{i}^{\ast }$ as the set of branch nodes of $T_{i}^{\ast }$, whereas $%
A_{i}^{\ast }=\{w\in W^{\ast }:\text{the $(i,w)$ entry of $A^{\ast}$ is $1$}\}$ be
the set of branch state nodes in $T_{i}^{\ast }$. Therefore, the optimal
bandwidth cost is $c(\mathcal{T}^{\ast },A^{\ast
})=\sum_{i=1}^{t}c(T_{i}^{\ast },A_{i}^{\ast })$. For MTRSA, Multi-Tree
Routing Phase first constructs the shortest-path trees $\mathcal{T}%
^{(1)}=(T_{1}^{(1)},\dots ,T_{t}^{(1)})$, and the rerouting procedure of the
Multi-Tree Phase outputs the new trees  $\mathcal{T}^{(2)}$. MTRSA finally
generates the trees $\mathcal{T}^{(3)}$ with the assignment $A^{(3)}$ of 
branch state nodes. Let $A$ be an assignment such that for each
non-overloaded node $u$ within $\mathcal{T}^{(1)}$, MTRSA assigns a branch
state node on $u$ to each tree in $\mathcal{T}^{(1)}$ with $u$ as a branch
node. According to the State-Note Assignment Phase, we have $A\subseteq
A^{(3)}$. On the other hand, in Multi-Tree Routing Phase, MTRSA updates the
trees only when the bandwidth cost does not increase, and the branch state
nodes can only reduce the cost. Therefore, we have $c(\mathcal{T}%
^{(3)},A^{(3)})\leq c(\mathcal{T}^{(2)},A^{(3)})\leq c(\mathcal{T}^{(2)},A)$%
. In the rerouting procedure of Multi-Tree Routing Phase, suppose we reroute 
$\mathcal{T}$ to $\mathcal{T^{\prime }}$. Since each rerouting step does not
create any new overloaded node, and the new path in $\mathcal{T^{\prime }}$
ensures that $c(\mathcal{T^{\prime }},A)\leq c(\mathcal{T},A)$. Therefore, $%
c(\mathcal{T}^{(2)},A)\leq c(\mathcal{T}^{(1)},A)$ holds by induction, and
we have $c(\mathcal{T}^{(3)},A^{(3)})\leq c(\mathcal{T}^{(2)},A)\leq c(%
\mathcal{T}^{(1)},A)\leq c(\mathcal{T}^{(1)},\varnothing )$. Since the path $%
P_{s_{i},d}^{(1)}$ from the source $s_{i}$ to node $d$ in $D_{i}$ in tree $%
T_{i}^{(1)}$ is the shortest path from $s_{i}$ to $d$ in $G$, and $%
T_{i}^{\ast }$ has a path from $s_{i}$ to $d$, we have $c(P_{s_{i},d}^{(1)})%
\leq c(T_{i}^{\ast })$ for every $i$ and every $d\in D_{i}$. Therefore $c(%
\mathcal{T}^{(3)},A^{(3)})\leq c(\mathcal{T}^{(1)},\varnothing
)=\sum_{i=1}^{t}c(T_{i}^{(1)},\varnothing )\leq \sum_{i=1}^{t}\sum_{d\in
D_{i}}c(P_{s_{i},d}^{(1)})\leq \sum_{i=1}^{t}\sum_{d\in D_{i}}c({T}%
_{i}^{\ast },A_{i}^{\ast })=\sum_{i=1}^{t}|D_{i}|\times c({T}%
_{i}^{\ast },A_{i}^{\ast })\leq \delta(\sum_{i=1}^{t}c({T}_{i}^{\ast
},A_{i}^{\ast }))=\delta\times c(\mathcal{T}^{\ast },A^{\ast })$. The theorem
follows.
\end{proof}

\textbf{Time Complexity.} We first find the shortest path between any two
nodes in $G$ with Johnson's algorithm in $O(|V||E|+|V|^{2}\log |V|)$ time as the
pre-processing procedure. Multi-Tree Routing Phase constructs the
shortest-path tree for each source $s_{i}$ and its corresponding destination
set $D_{i}$, and MTRSA compares the distance from a destination node to all
other nodes in $O(|V|)$ time. Processing all $d\in D_{i}$ requires $%
O(|V||D_{i}|)=O(\delta |V|)$ time, and processing all shortest-path trees
requires $O(t\delta |V|)$ time. After constructing the shortest-path trees,
MTRSA reroutes the paths from each overloaded node to some of its downstream
nodes. Since there are at most $t\delta $ branches in the tree set, we
reroute at most $\delta $ paths for each branch node, and each rerouting
requires the comparison of at most $|V|$ distances of paths. Therefore, the
above procedure requires $O(t\delta ^{2}|V|\log |V|)$ time, and Routing
Phase requires $O(t\delta ^{2}|V|\log |V|)$ time.

Afterward, in Greedy Assigning Stage of State-Node Assignment Phase, there
are at most $t|V|$ branch state nodes required to be assigned, and this
stage has at most $t|V|$ iterations. In each iteration, we first derive
the minimum cost reduced by assigning a branch state node on each node $v$
of every $T_{i}$ in $O(t|V|)$ time, and we update the cost reduced by
assigning each new branch state node to the tree $T_{i}$ in $%
O(|T_{i}||E|)=O(|V||E|)$ time. Therefore, this stage requires $%
O(t|V|(t|V|+|V||E|))=O(t|V|^{2}(t+|E|))$ time. Then, Local Search Stage
carefully examines the overloaded nodes. In each iteration, we adjust the
branch state nodes on each node $u$ in different trees without changing
other branch state nodes to find the new bandwidth cost in $O(t|E|)$ time,
and this stage takes $O(t|V||E|)$ time because there are at most $|V|$
iterations. Therefore, State-Node Assignment Phase requires $%
O(t|V|^{2}(t+|E|))$ time to allocate the branch state nodes and $O(t\delta
^{2}|V|\log |V|)$ time for rerouting, and MTRSA requires $O(t\delta
|V|^{2}\log |V|(t+|E|))$ time.

\subsection{\protect \bigskip Extension to SMTE\label{sec: extension}}

For SMTE, since the number of times that each packet is delivered in a link
cannot be acquired before assigning the branch state nodes, we first present
the concept of \emph{weak edge capacity constraint}. Let $\varepsilon
_{i,u,v}^{\prime }=1$ if $\varepsilon _{i,u,v}$ is a positive integer in our
Integer Programming formulation, and $\varepsilon _{i,u,v}^{\prime }=0$
otherwise. MTRSA needs to ensure $\sum_{1\leq i\leq t}f_{i}\times
\varepsilon _{i,u,v}^{\prime }\leq c_{u,v}$ holds for $\forall e_{u,v}\in E$
before assigning the branch state nodes. In addition, in the general case of
SMTE, the storage size of each branch state node $u$ in Group Table is
proportional to the degree of $u$ in the corresponding multicast tree~\cite{mckeown_openflow:_2008}. Therefore, let $\beta _{i,u}$ denote the node
weight (i.e., storage size) of assigning a branch state node on $u$ to tree $%
T_{i}$, and $b_{u}$ here denotes the size of Group Table in $u$. For SMTE,
we extend MTRSA as follows.

%\begin{enumerate} 
Before Multi-Tree Routing Phase, we sort the multicast trees in $%
\mathcal{T}$ according to their data rates in the ascending order,  $%
f_{1}\leq f_{2}\leq \dots \leq f_{t}$. In Multi-Tree Routing Phase, we find the first shortest-path tree $%
T_{1}$ in $\mathcal{T}$ and decrease the link capacity $c_{u,v}$ for every
edge in $T_{1}$ by its flow rate $f_{1}$. Note that any edge $e_{u,v}$ with
insufficient residual capacity to support $f_{2}$ will be removed since it cannot
support the rest of the multicast flows. The above procedure is repeated for
other trees in $\mathcal{T}$. 

In the rerouting procedure of Multi-Tree Routing Phase, we reroute
each multicast tree $T_{i}$ according to the weak edge capacity constraint,
such that any new path from $w$ to $v$ in Section~\ref{sec:aglo_des} must have sufficient capacity to
support $f_{i}$. In Greedy Assignment Stage, we find an element $x=(i,u)$ in $N-A$
according to $\frac{z(A\cup \{(i,u)\})-z(A)}{\beta _{i,u}}=\max \{ \frac{%
z(A\cup \{(i^{\prime },u^{\prime })\})-z(A)}{\beta _{i^{\prime },u^{\prime }}%
}:A\cup \{(i^{\prime },u^{\prime })\} \in \mathcal{M}\}$, which represents
the normalized cost reduction. In other words, the node weight $\beta _{i,u}$
is considered during the assignment of branch state nodes.

In Local Search Stage, optimizing the state-node assignment of one
node becomes the same as the knapsack problem because each candidate branch
state node now has a profit (i.e., cost reduction) and a size (i.e., node
weight), and we exploit Polynomial-Time Approximation Scheme for
knapsack~\cite{Caprara2000333} to find the solution. In the rerouting procedure of Local Search Stage, since now the branch
state nodes have been specified, we reroute each multicast tree $T_{i}$
according to the original edge capacity constraint $(7)$, such that any new
path from $w$ to $v$ must have sufficient capacity to support $f_{i}$. If
the amount of multicast flows in any edge exceeds the capacity constraint,
we also reroute its closest upstream state node $u$ in a tree $T_{i}$ to $w$%
, such that the new path from $w$ to $v$ follows the link capacity
constraint.

MTRSA can support the dynamic multicast group membership as follows. When a user $v$ joins or leaves a multicast group, MTRSA adds or trims (if no other users are located downstream to the user) the corresponding branch from the upstream branch node $u$ in the same way as Multi-Tree Routing Phase. Afterward, State-Node Assignment Phase adjusts the new branch state node if necessary.  Therefore, it does not need to re-compute the whole tree.

\subsection{Pseudo Code}
The pseudo code of MTRSA is shown in Algorithm~\ref{alg}.

\begin{algorithm}[t]
%\color{red}
\caption{ Multi-Tree Routing and State Assignment Algorithm (MTRSA)}
\label{alg}
\begin{algorithmic}[1]
\REQUIRE{A network $G=(V,E)$, source nodes $s_{1}, s_{2}, \dots, s_{t}$, destination sets $D_{1}, D_{2}, \cdots, D_{t}$, and State-Node table $A$.}
\ENSURE {Multicast trees $T_{1}, T_{2}, \cdots, T_{t}$, $s_{i}$ is the root of $T_{i}$, and $D_{i}$ is in $T_{i}$.}
\STATE{//Multi-Tree Routing Phase}
\FOR{$i\in \{1, 2, \dots, t\}$}
	\STATE{$T_{i} \leftarrow$ shortest path tree containing $D_{i}$ with root $s_{i}$}
\ENDFOR
\STATE{}
\FOR{overloaded node $u$}
	\STATE{Reroute an appropriate downstream node $v$ of $u$ to balance the distribution of branch nodes}
\ENDFOR
\STATE{}
\STATE{//State-Node Assignment Phase 1) Greedy Assigning Stage}
\STATE{$A\leftarrow \left[ \mathbf{0} \right]$}
\WHILE{there is $x\in N-A$ such that $A\cup \{x\}\in\mathcal{M}$}
	\STATE{$x_{\max}\leftarrow \arg \max\limits_{x\in N - A}\{Z(A\cup \{x\}): A\cup\{x\}\in\mathcal{M}\}$}
	\STATE{$A \leftarrow A\cup \{x_{\max}\}$}
\ENDWHILE
\STATE{}
\STATE{//State-Node Assignment Phase 2) Local Search Stage}
\FOR{overloaded node $u$}
	\STATE{Re-assign node state on $u$ to maximizing reduction}
\ENDFOR
\STATE{}
\FOR{overloaded node $u$}
	\WHILE{node $u$ is overloaded}
		\FOR{each node $z$ in $V$}
			\IF{node $z$ is not overloaded}
				\STATE{Reroute node the downstream node $v$ of $u$ to node $z$}
				\STATE{Break the for loop}
			\ENDIF
		\ENDFOR
	\ENDWHILE
	%\STATE{Reroute an appropriate downstream node $v$ of $u$ to balance the distribution of branch nodes}
\ENDFOR
\RETURN{$T_{1}, T_{2}, \cdots, T_{t}$ and $A$}	
\end{algorithmic}
\end{algorithm}

\section{Performance Evaluation}
\label{sec:evaluation} In this section, we first compare MTRSA and other approaches with real topologies. Afterward, we deploy our algorithm in a small experimental SDN network with HP SDN switches to evaluate the video performance with YouTube HD traffic that requires a large amount of bandwidth consumption.. 

%The source code of RAERA and all our
%implementations (i.e., loss recovery in recovery nodes and a YouTube proxy
%for multicast) can be downloaded in ~\cite{extension_code}.

\subsection{Simulation Setup}
We simulate our algorithm in two real networks: VtlWavenet2011 and 
Columbus~\cite{_internet_2014}. VtlWavenet2011 includes 91 nodes and 
96 links, while Columbus has 70 nodes and 85 links. Our simulation is
divided into small-scale and large-scale cases. The numbers of trees in the small-scale cases are smaller than 100, whereas there are more than 2000 trees in the large-scale cases. The link capacity in the topologies is set to the level that the maximal bottleneck link utilization reaches 100\%~\cite{5743044}, and the edge cost of each link is set as 1.
We vary the number of multicast trees, the number of destinations, and node capacity. 
The source and destinations are chosen randomly from each network. 

We compare MTRSA with the following algorithms: 1) the shortest-path tree
algorithm (SPT), 2) the Steiner tree (ST) algorithm\footnote{There are some single-tree multicast routing algorithms with differernt purpuses (such as QoS), but they are not included in this study because ST (i.e., the optimal solution for single tree) outperforms those approaches in terms of the bandwidth consumption.}~\cite%
{hwang_steiner_1992}, and 3) CPLEX~\cite{_ibm_2014}, which finds the optimal solutions
of SMTE problem by solving the IP formulation in Section \ref%
{sec: ProblemFormulation}. In SPT and ST, the branch state nodes of different trees are randomly assigned to a branch node when node is fully utilized, i.e., the number of branch state nodes reaches the node capacity. Each SPT and ST is added to the  network iteratively. If an edge does not have sufficient residual capacity to support the multicast flow of a new SPT or ST, it will be removed accordingly to avoid choosing the edge in the SPT or ST. We implement all algorithms in an HP DL580 server with four Intel Xeon E7-4870 2.4 GHz CPUs and 128 GB RAM. Each simulation result is averaged over 100 samples.

\subsection{Small-Scale Evaluation}
In the small-scale cases, we compare the total bandwidth costs of all trees in MTRSA, SPT, ST, and CPLEX with different number of trees ($|T|$), different node
capacity ($b_u$), and different number of destinations $|D|$. As shown in Fig.~\ref{f:cost_cplex} and Fig.~\ref{f:cost_cplex_dst},
MTRSA generates a solution with the total bandwidth cost very close to the optimal
solution.  Although SPT chooses the shortest path to the  destinations, it does not carefully 
examine the node capacity, and its cost is thus higher than MTRSA.
Compared with SPT, the distance from the source to a destination in ST is 
usually higher because the path needs to be deviated from the shortest one in order to 
share more edges with another path. Therefore, more branch nodes are usually involved in ST. Without a sophisticated allocation of branch state nodes, ST incurs a slightly higher cost than SPT due to the additional bandwidth consumption in unicast tunneling for the branch nodes with full Group Table. The difference becomes more significant when the number of trees increases as shown in Fig.~\ref{f:cost_cplex}. Similarly, Fig.~\ref{f:cost_cplex_dst} manifests that the total bandwidth cost of each tree increases as the number of destinations grows, because each tree becomes larger in this case.

\begin{figure}[t!]
\centering
\subfigure[VtlWavenet2011]{
    \label{f:cost_cplex:a}
    \includegraphics[width=4cm, height = 2.2cm]{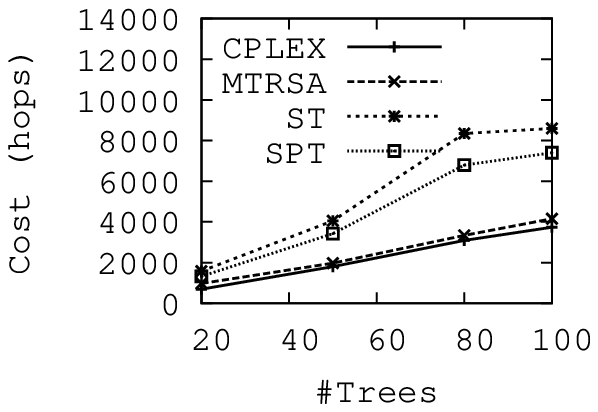}}
    %\vspace{-1.2\baselineskip}
\subfigure[Columbus]{
    \label{f:cost_cplex:b}
    \includegraphics[width=4.1cm, height = 2.2cm]{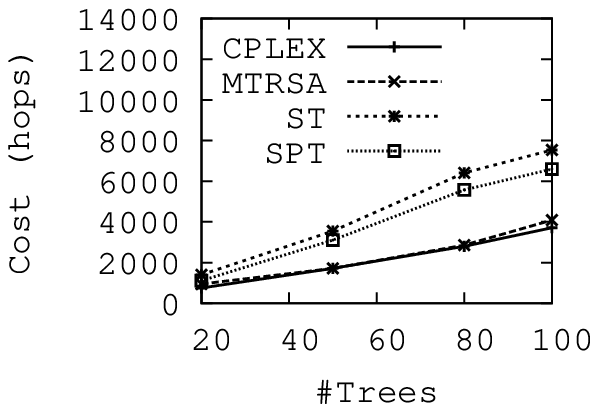}}
\caption{Cost with different $|T|$ ($b_u$ = 7, $|D|$ = 6)}
\label{f:cost_cplex}
\end{figure}
%\vspace{-10pt}

%\begin{figure}[t!]
%%\vspace{-10pt}
%\centering   
%\subfigure[VtlWavenet2011]{
%    \label{f:cost_cplex_cap:c}
%    \includegraphics[width=4cm, height = 2.2cm]{./figures/cost_cplex_cap_1}}
%    \vspace{-0.7\baselineskip}
%\subfigure[Columbus]{
%    \label{f:cost_cplex_cap:d}
%    \includegraphics[width=4cm, height = 2.2cm]{./figures/cost_cplex_cap_2}}
%\caption{Cost with different $b_u$ ($|T|$ = 100, $|D|$ = 6)}
%\label{f:cost_cplex_cap}
%\end{figure}

\begin{figure}[t!]
\centering  
%\vspace{-0.25in}
\subfigure[VtlWavenet2011]{
    \label{f:cost_cplex_dst:e}
    \includegraphics[width=4cm, height = 2.2cm]{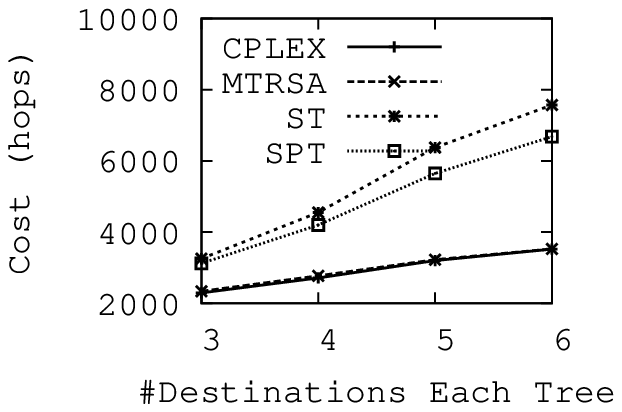}}
\subfigure[Columbus]{
    \label{f:cost_cplex_dst:f}
    \includegraphics[width=4cm, height = 2.2cm]{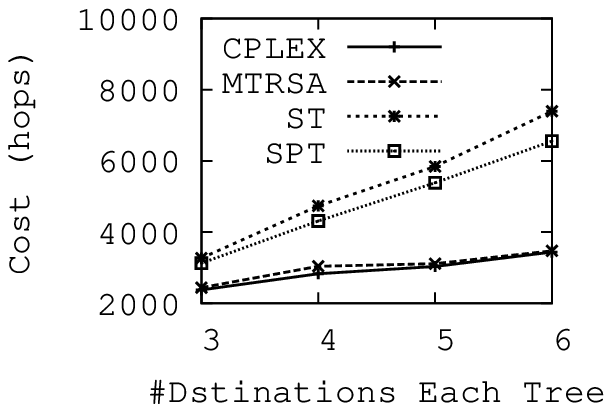}}
\caption{Cost with different $|D|$ ($|T|$ = 100, $b_u$ = 7)}
\label{f:cost_cplex_dst}
\end{figure}

%\begin{figure}[t!]
%\centering
%\includegraphics[width=8cm, height = 2.8cm]{cost_cplex.eps} \vspace{-0.05in} 
%%\vspace{-0.1in}
%%\vspace{-0.2in}
%\caption{Cost with different number of trees ($\bar{b_u}$ = 7, $\bar{|D|}$ = 6)}
%\label{f:cost_cplex}
%\end{figure}
%
%\begin{figure}[t!]
%\centering
%\includegraphics[width=8cm, height = 2.8cm]{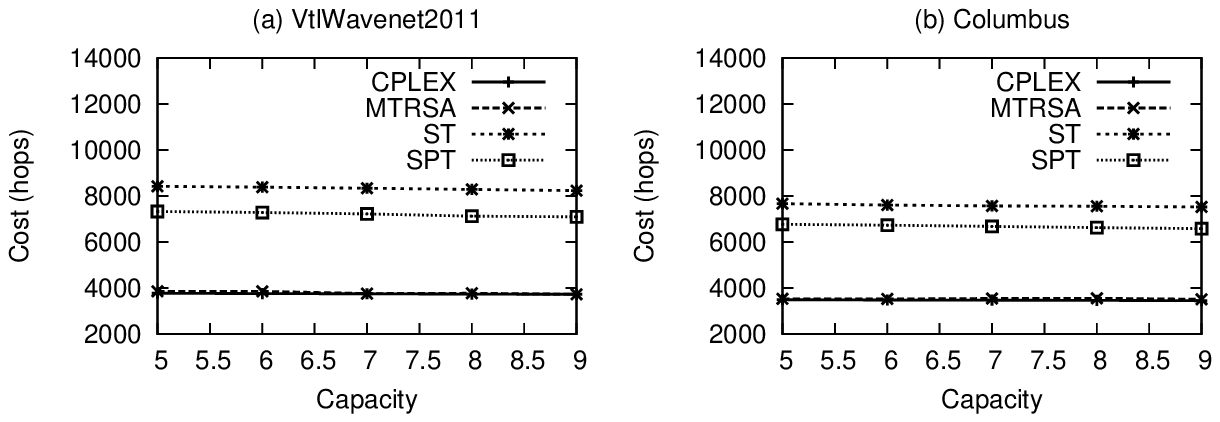} \vspace{-0.05in} 
%%\vspace{-0.1in}
%%\vspace{-0.2in}
%\caption{Cost with different node capacity ($|T|$ = 100, $\bar{|D|}$ = 6)}
%\label{f:cost_cplex_cap}
%\end{figure}
%
%\begin{figure}[t!]
%\centering
%\includegraphics[width=8cm, height = 2.8cm]{cost_cplex_dst.eps} \vspace{-0.05in} 
%%\vspace{-0.1in}
%%\vspace{-0.2in}
%\caption{Cost with different number of destinations ($|T|$ = 100, $\bar{b_u}$ = 7)}
%\label{f:cost_cplex_dst}
%\end{figure}

\subsection{Large-Scale Evaluation}
In the following, we evaluate MTRSA, ST, and SPT in larger-scale cases, where the number
of multicast tree is ranged from 2000 to 10000, the number of destinations is from
5 to 25, and the node capacity is between 50 and 250. Compared with smaller-scale 
cases, the advantage of MTRSA is more significant in larger-scale cases. In Fig.~\ref{f:cost_topo}, the total cost increases with the number of trees. For a larger network, the source and any destination are inclined to be located with distantly, but there is also a higher chance to find a node with sufficient capacity as a branch node for rerouting. Therefore, MTRSA effectively reduces the total bandwidth cost by 66\% and 59\%, respectively, compared to ST and SPT. In addition, Fig.~\ref{f:cost_topo_cap} shows that the bandwidth costs can be effectively reduced when we increase node capacity, and setting the node capacity as 100 is sufficient for MTRSA. On the other hand, the bandwidth cost grows with the number of destinations, because all trees are required to span more nodes as shown in Fig.~\ref{f:cost_topo_dst}.

%We also observe
%resource utilization, which is used node capacity divided by total node capacity,
%increases slowly with the number of destinations, because larger trees consumes
%more node capacity.

\begin{figure}[t!]
\centering
\subfigure[VtlWavenet2011]{
    \label{f:cost_topo:a}
    \includegraphics[width=4cm, height = 2.2cm]{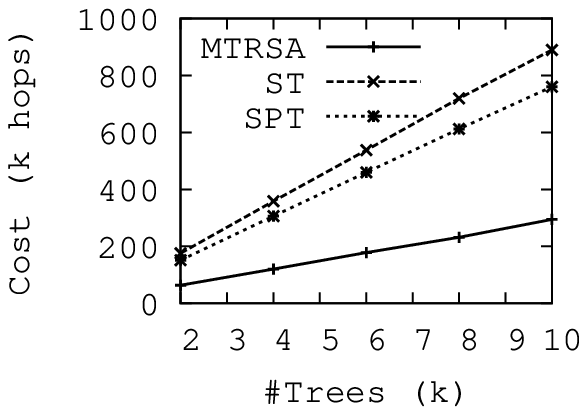}}
    %\vspace{-1.2\baselineskip}
\subfigure[Columbus]{
    \label{f:cost_topo:b}
    \includegraphics[width=4.1cm, height = 2.2cm]{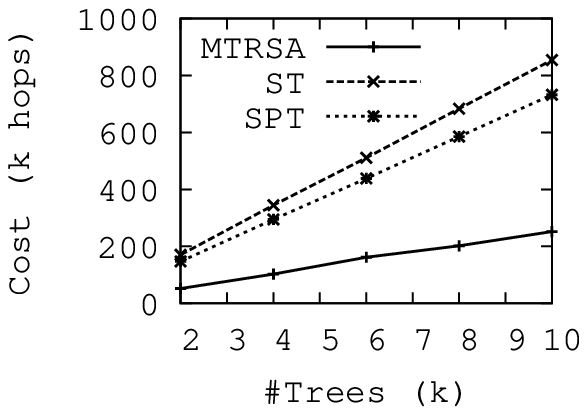}}
\caption{Cost with different $|T|$ ($b_u$ = 300, $|D|$ = 10)}
\label{f:cost_topo}
\vspace{-0.23in}
\end{figure}    

\begin{figure}[t!]
\centering 
\subfigure[VtlWavenet2011]{
    \label{f:cost_topo_cap:c}
    \includegraphics[width=4cm, height = 2.2cm]{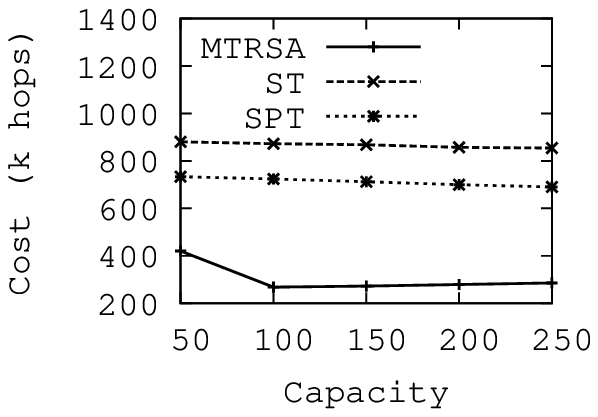}}
    \vspace{-0.7\baselineskip}
\subfigure[Columbus]{
    \label{f:cost_topo_cap:d}
    \includegraphics[width=4cm, height = 2.2cm]{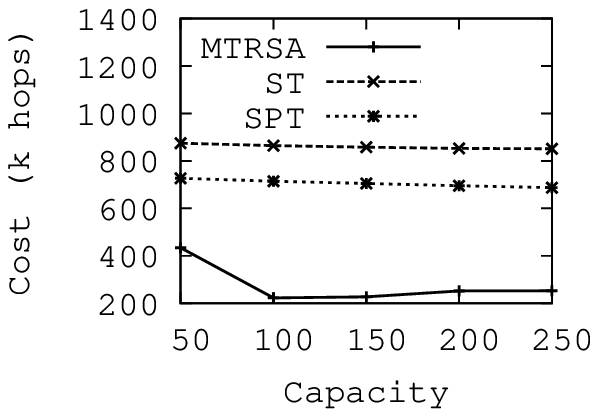}}
\caption{Cost with different $b_u$ ($|T|$ = 6000, $|D|$ = 10)}
\label{f:cost_topo_cap}
\vspace{-0.2in}
\end{figure}    

\begin{figure}[t!]
\centering    
%\vspace{-0.25in}
\subfigure[VtlWavenet2011]{
    \label{f:cost_topo_dst:e}
    \includegraphics[width=4cm, height = 2.2cm]{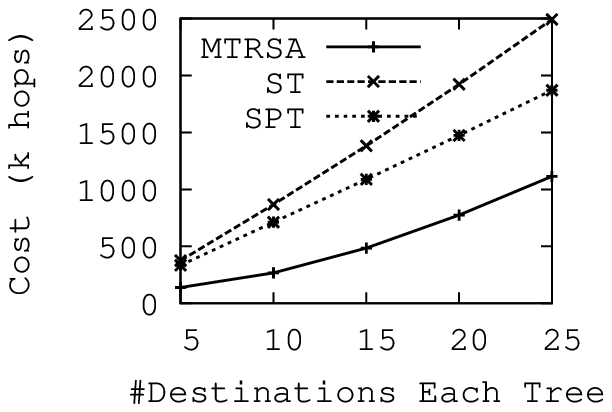}}
\subfigure[Columbus]{
    \label{f:cost_topo_dst:f}
    \includegraphics[width=4cm, height = 2.2cm]{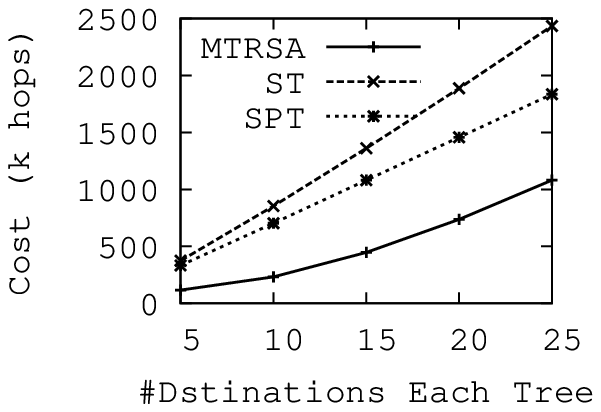}}
\caption{Cost with different $|D|$ ($|T|$ = 6000, $b_u$ = 300)}
\label{f:cost_topo_dst}
%\vspace{-0.23in}
\end{figure}

%\begin{figure}[t!]
%\centering    
%%\vspace{-0.25in}
%\subfigure[VtlWavenet2011]{
%    \label{f:utilization_topo_dst:e}
%    \includegraphics[width=4cm, height = 2.2cm]{./figures/utilization_topo_dst_1}}
%\subfigure[Columbus]{
%    \label{f:utilization_topo_dst:f}
%    \includegraphics[width=4cm, height = 2.2cm]{./figures/utilization_topo_dst_2}}
%\caption{Resource utilization with different $|D|$ ($|T|$ = 6000, $b_u$ = 300)}
%\label{f:utilization_topo_dst}
%\end{figure}

%\begin{figure}[t!]
%\centering
%\includegraphics[width=8cm, height = 2.8cm]{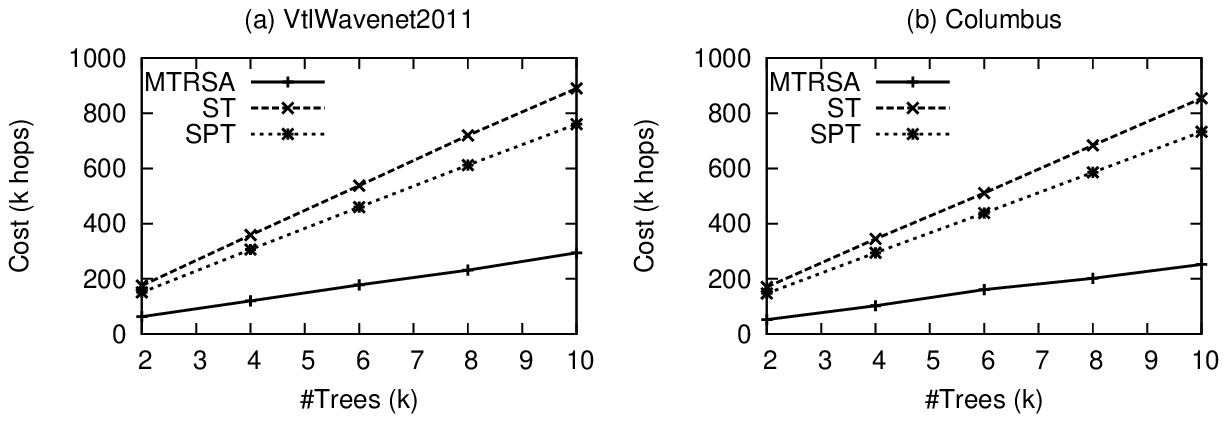} \vspace{-0.05in} 
%%\vspace{-0.1in}
%%\vspace{-0.2in}
%\caption{Cost with different number of trees ($\bar{b_u}$ = 300, $\bar{|D|}$ = 10)}
%\label{f:cost_topo}
%\end{figure}
%
%\begin{figure}[t!]
%\centering
%\includegraphics[width=8cm, height = 2.8cm]{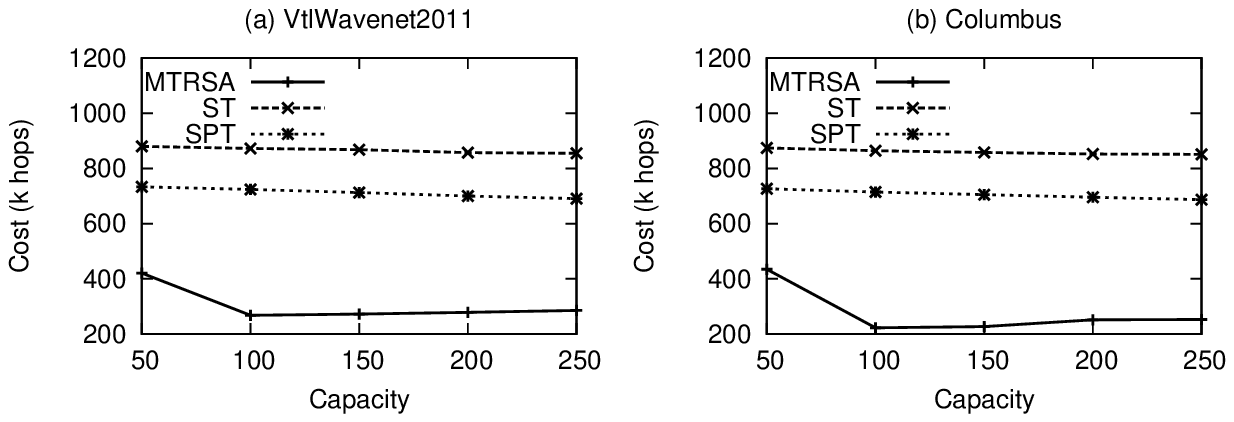} \vspace{-0.05in} 
%%\vspace{-0.1in}
%%\vspace{-0.2in}
%\caption{Cost with different node capacity ($|T|$ = 6000, $\bar{|D|}$ = 10)}
%\label{f:cost_topo_cap}
%\end{figure}
%
%\begin{figure}[t!]
%\centering
%\includegraphics[width=8cm, height = 2.8cm]{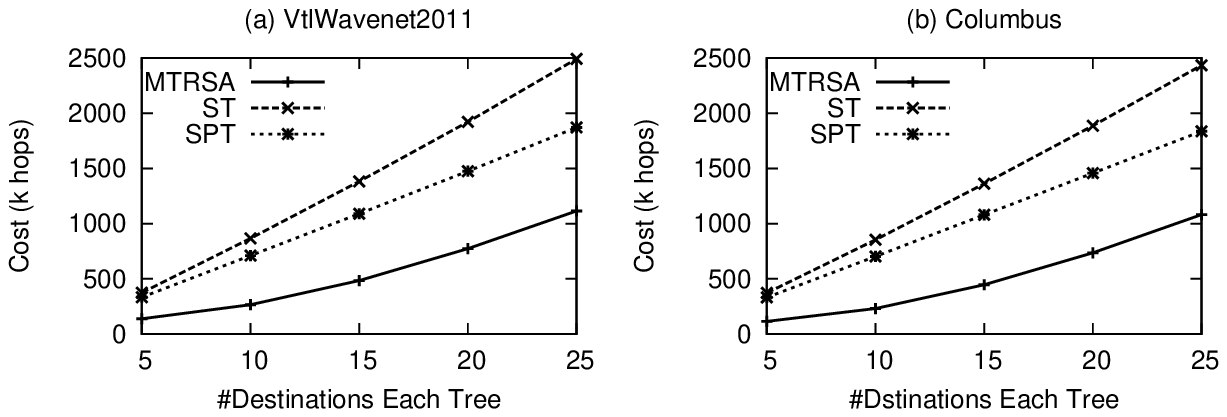} \vspace{-0.05in} 
%%\vspace{-0.1in}
%%\vspace{-0.2in}
%\caption{Cost with different number of destinations ($|T|$ = 6000, $\bar{b_u}$ = 300)}
%\label{f:cost_topo_dst}
%\end{figure}

%\begin{figure}[t!]
%\centering
%\includegraphics[width=8cm, height = 2.8cm]{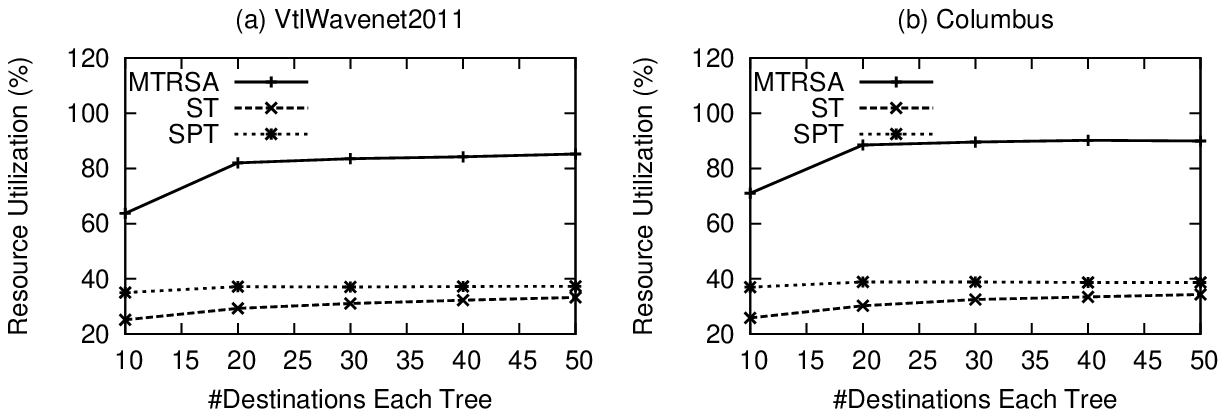} \vspace{%
%-0.05in} %\vspace{-0.1in}
%%\vspace{-0.2in}
%\caption{Resource utilization with different number of destinations ($|T|$ = 6000, $\bar{b_u}$ = 300)}
%\label{f:utilization_topo}
%\end{figure}

Table I summarizes the running time of MTRSA with different $|T|$ and
$|D|$. With a smaller input, such as 2000 trees and 5
destinations in each tree, the running time for MTRSA is around 1 second. As $|T|$ and 
$|D|$ increase, MTRSA only requires around 73
seconds in the largest case with 10000 multicast trees. Therefore, it is envisaged that our algorithm is practical to be deployed in SDN.

\begin{table}[t!]
\caption{Running time of MTRSA (seconds)}
\begin{center}
\begin{tabular}{|c|c|c|c|c|c|}
\hline
$|T|$ & $|D|$ = 5 & $|D|$ = 10 & $|D|$ = 15 & $|D|$ = 20 & $|D|$ = 25 \\ \hline
2000 & 1.00 & 2.29 & 3.01 & 3.43 & 3.99 \\ 
4000 & 3.09 & 6.33 & 8.71 & 11.06 & 12.39 \\ 
6000 & 5.82 & 12.52 & 17.94 & 22.37 & 25.53 \\ 
8000 & 8.80 & 20.10 & 29.89 & 37.52 & 43.78 \\
10000 & 12.83 & 32.32 & 48.44 & 57.01 & 73.40 \\
\hline
\end{tabular}
\label{t:running_time_k}
\end{center}
\par
%\vspace{-0.25in}
\vspace{-0.2in}
%\caption{The running time of RAERA in seconds}
\end{table}

\subsection{Implementation}
To evaluate MTRSA in real environments, we implement it in an experimental SDN
with HP Procurve 5406zl OpenFlow-enabled switches. We use Floodlight as the
OpenFlow controller to install the multicast forwarding rules in SDN-FEs. We install multicast group information in group table and create virtual ports mapping to multiple physical ports to forward multicast traffic.  MTRSA is
running on the top of Floodlight. Our testbed includes 12 nodes and 24 links as shown in Fig.~\ref{f:sdn_topo}, where the link capacity and node capacity are set as 50Mbps and 5, respectively. We randomly select 10 nodes as the video multicast sources, where each source is connected to a Youtube proxy to facilitate YouTube multicast. We implement the Youtube proxy by using VLC player, which can request video stream from Youtube and work as a video server. To support multicast, we modify TCP to aggregate TCP ACKs from multiple clients.
The full-HD test video is in 460 seconds encoded in H.264 with the average bit rate as 10Mbps. For each source, we randomly assign 10 destinations that play videos using the VLC player.
%Due to the space constraint, the detailed implementation setting is presented in \cite{extension}. 
Fig.~\ref{f:video_test} shows that the total bandwidth consumption during playback, and we average the bandwidth consumption every 40 seconds. The results manifest that the bandwidth consumption of MTRSA is 46\% and 35\% lower than ST and SPT, respectively. Therefore, MTRSA can effectively support multicast traffic engineering in SDN.

\begin{figure}[t!]
\centering
\subfigure[Topology]{
    \label{f:sdn_topo}
    \includegraphics[width=4cm, height = 3cm]{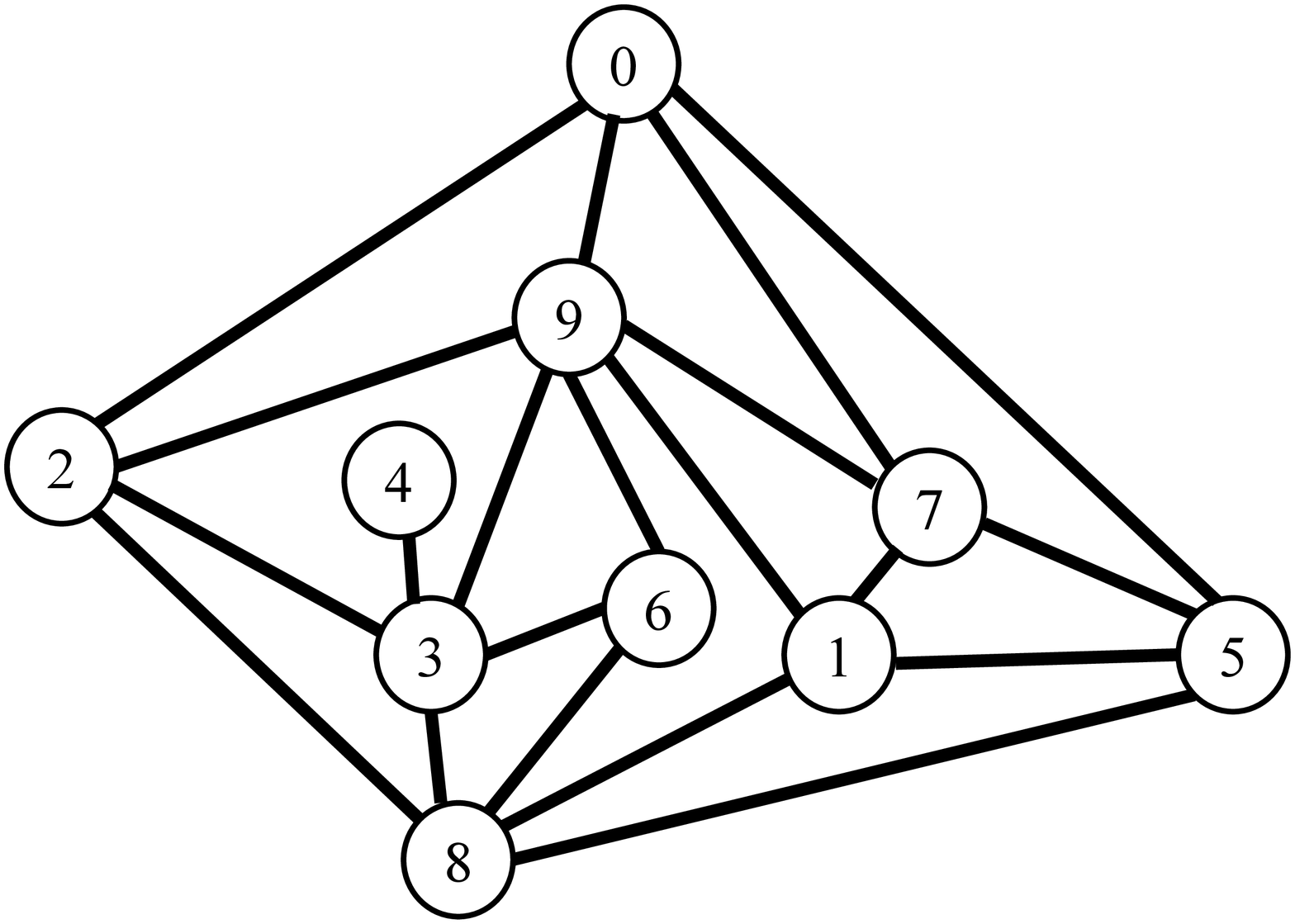}}
\subfigure[Bandwidth consumption]{
    \label{f:video_test}
    \includegraphics[width=4cm, height = 2.2cm]{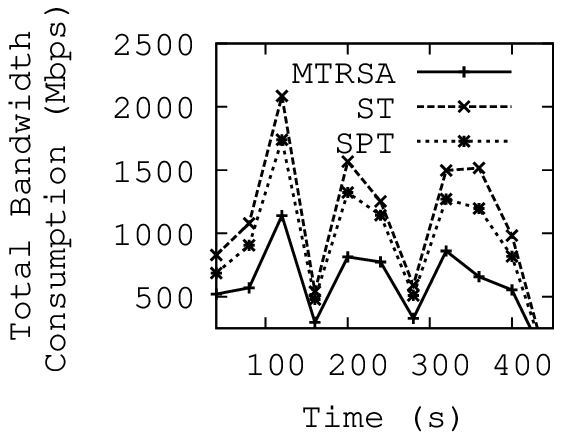}}
\caption{Implementation results of the experimental SDN}
\label{f:sdn_topo}
\vspace{-0.1in}
\end{figure}

%\begin{figure}[t!]
%\centering
%\includegraphics[width=4cm, height = 3cm]{video_test_topo.eps} \vspace{%
%-0.05in} %\vspace{-0.1in}
%%\vspace{-0.2in}
%\caption{The topology for video traffic experiment.}
%\label{f:sdn_topo}
%\end{figure}
%
%\begin{figure}[t!]
%\centering
%\includegraphics[width=6cm, height = 4cm]{video_test.eps} \vspace{%
%-0.05in} %\vspace{-0.1in}
%%\vspace{-0.2in}
%\caption{The topology for the experimental SDN network.}
%\label{f:video_test}
%\end{figure}

\section{Conclusion}
Recent studies on traffic engineering for SDN mostly focus on unicast, while most existing multicast routing algorithms are designed to find the routing of a multicast tree, instead of multiple trees. In this paper, therefore, we have formulated Scalable Multicast Traffic Engineering Problem (SMTE) to minimize the total bandwidth cost according to the link and node capacity constraints for multiple trees in SDN. We have proved that SMTE-N is NP-hard and not able to be approximated within $\delta $, while SMTE cannot be approximated within any ratio. To solve the problem, we have proposed Multi-Tree Routing and State Assignment Algorithm (MTRSA), which is a $\delta $-approximation algorithm for SMTE-N, while MTRSA has been extended to support SMTE as well. Simulation based on real topologies and implementation with Youtube traffic manifest that MTRSA can effectively find the routing of multiple multicast trees and assign the branch state nodes in order to reduce the total bandwidth cost, while the computation time to construct numerous trees is also reasonable for practical SDN. Since the tree obtained by MTRSA is not delay bounded, we will extend it to support QoS multicast in the future work.

\section*{Acknowledgment}
This work is supported by MOST 104-2622-8-009-001.

\linespread{0.98} 
\bibliographystyle{IEEEtran}
\bibliography{IEEEabrv,multicast}

\end{document}